\DeclareMathAlphabet{\mathpzc}{OT1}{pzc}{m}{it}
\begin{document}

\theoremstyle{plain}
\newtheorem{theorem}{Theorem}[section]
\newtheorem{lemma}[theorem]{Lemma}
\newtheorem{proposition}[theorem]{Proposition}
\newtheorem{corollary}[theorem]{Corollary}
\newtheorem{claim}[theorem]{Claim}
\newtheorem{definition}[theorem]{Definition}

\theoremstyle{definition}
\newtheorem{remark}[theorem]{Remark}
\newtheorem{note}[theorem]{Note}
\newtheorem{example}[theorem]{Example}
\newtheorem{assumption}[theorem]{Assumption}
\newtheorem*{notation}{Notation}
\newtheorem*{assuL}{Assumption ($\mathbb{L}$)}
\newtheorem*{assuAC}{Assumption ($\mathbb{AC}$)}
\newtheorem*{assuEM}{Assumption ($\mathbb{EM}$)}
\newtheorem*{assuES}{Assumption ($\mathbb{ES}$)}
\newtheorem*{assuM}{Assumption ($\mathbb{M}$)}
\newtheorem*{assuMM}{Assumption ($\mathbb{M}'$)}
\newtheorem*{assuL1}{Assumption ($\mathbb{L}1$)}
\newtheorem*{assuL2}{Assumption ($\mathbb{L}2$)}
\newtheorem*{assuL3}{Assumption ($\mathbb{L}3$)}
\newtheorem{charact}[theorem]{Characterization}

\renewenvironment{proof}{{\parindent 0pt \it{ Proof:}}}{\mbox{}\hfill\mbox{$\Box\hspace{-0.5mm}$}\vskip 16pt}
\newenvironment{proofthm}[1]{{\parindent 0pt \it Proof of Theorem #1:}}{\mbox{}\hfill\mbox{$\Box\hspace{-0.5mm}$}\vskip 16pt}
\newenvironment{prooflemma}[1]{{\parindent 0pt \it Proof of Lemma #1:}}{\mbox{}\hfill\mbox{$\Box\hspace{-0.5mm}$}\vskip 16pt}
\newenvironment{proofcor}[1]{{\parindent 0pt \it Proof of Corollary #1:}}{\mbox{}\hfill\mbox{$\Box\hspace{-0.5mm}$}\vskip 16pt}
\newenvironment{proofprop}[1]{{\parindent 0pt \it Proof of Proposition #1:}}{\mbox{}\hfill\mbox{$\Box\hspace{-0.5mm}$}\vskip 16pt}

\newcommand{\y}{\textcolor{yellow}}
\newcommand{\bl}{\textcolor{blue}}

\newcommand{\Law}{\ensuremath{\mathop{\mathrm{Law}}}}
\newcommand{\loc}{{\mathrm{loc}}}
\newcommand{\Log}{\ensuremath{\mathop{\mathcal{L}\mathrm{og}}}}
\newcommand{\Meixner}{\ensuremath{\mathop{\mathrm{Meixner}}}}

\let\MID\mid
\renewcommand{\mid}{|}

\let\SETMINUS\setminus
\renewcommand{\setminus}{\backslash}

\def\stackrelboth#1#2#3{\mathrel{\mathop{#2}\limits^{#1}_{#3}}}

\renewcommand{\theequation}{\thesection.\arabic{equation}}
\numberwithin{equation}{section}

\newcommand\llambda{{\mathchoice
      {\lambda\mkern-4.5mu{\raisebox{.4ex}{\scriptsize$\backslash$}}}
      {\lambda\mkern-4.83mu{\raisebox{.4ex}{\scriptsize$\backslash$}}}
      {\lambda\mkern-4.5mu{\raisebox{.2ex}{\footnotesize$\scriptscriptstyle\backslash$}}}
      {\lambda\mkern-5.0mu{\raisebox{.2ex}{\tiny$\scriptscriptstyle\backslash$}}}}}

\newcommand{\prozess}[1][L]{{\ensuremath{#1=(#1_t)_{0\le t\le T}}}\xspace}
\newcommand{\prazess}[1][L]{{\ensuremath{#1=(#1_t)_{0\le t\le T^*}}}\xspace}

\newcommand{\g}{\textcolor{green}}
\newcommand{\lijepoa}{{\mathcal{A}}}
\newcommand{\lijepob}{{\mathcal{B}}}
\newcommand{\lijepoc}{{\mathcal{C}}}
\newcommand{\lijepod}{{\mathcal{D}}}
\newcommand{\lijepoe}{{\mathcal{E}}}
\newcommand{\lijepof}{{\mathcal{F}}}
\newcommand{\lijepog}{{\mathcal{G}}}
\newcommand{\lijepok}{{\mathcal{K}}}
\newcommand{\lijepoo}{{\mathcal{O}}}
\newcommand{\lijepop}{{\mathcal{P}}}
\newcommand{\lijepoh}{{\mathcal{H}}}
\newcommand{\lijepom}{{\mathcal{M}}}
\newcommand{\lijepou}{{\mathcal{U}}}
\newcommand{\lijepov}{{\mathcal{V}}}
\newcommand{\lijepoy}{{\mathcal{Y}}}
\newcommand{\cF}{{\mathcal{F}}}
\newcommand{\cG}{{\mathcal{G}}}
\newcommand{\cH}{{\mathcal{H}}}
\newcommand{\cM}{{\mathcal{M}}}
\newcommand{\cD}{{\mathcal{D}}}
\newcommand{\bD}{{\mathbb{D}}}
\newcommand{\bF}{{\mathbb{F}}}
\newcommand{\bG}{{\mathbb{G}}}
\newcommand{\bH}{{\mathbb{H}}}

\newcommand{\er}{{\mathbb{R}}}
\newcommand{\ce}{{\mathbb{C}}}
\newcommand{\erd}{{\mathbb{R}^{d}}}
\newcommand{\en}{{\mathbb{N}}}
\newcommand{\de}{{\mathrm{d}}}
\newcommand{\im}{{\mathrm{i}}}
\newcommand{\set}[1]{\ensuremath{\left\{#1\right\}}}
\newcommand{\indik}{{\mathbf{1}}}
\newcommand{\D}{{\mathbb{D}}}
\newcommand{\E}{{\mathbb{E}}}
\newcommand{\N}{{\mathbb{N}}}
\newcommand{\Q}{{\mathbb{Q}}}
\renewcommand{\P}{{\mathbb{P}}}
\newcommand{\ud}{\operatorname{d}\!}
\newcommand{\ii}{\operatorname{i}\kern -0.8pt}
\newcommand{\Var}{\operatorname{Var }\,}
\newcommand{\dt}{\operatorname{d}\!t}   
\newcommand{\ds}{\operatorname{d}\!s}   
\newcommand{\dy}{\operatorname{d}\!y }    
\newcommand{\du}{\operatorname{d}\!u}  
\newcommand{\dv}{\operatorname{d}\!v}   
\newcommand{\dx}{\operatorname{d}\!x}   
\newcommand{\dq}{\operatorname{d}\!q}   

\def\EM{\ensuremath{(\mathbb{EM})}\xspace}
\newcommand{\la}{\langle}
\newcommand{\ra}{\rangle}

\newcommand{\Norml}[1]{%
{|}\kern-.25ex{|}\kern-.25ex{|}#1{|}\kern-.25ex{|}\kern-.25ex{|}}

\title[]{Martingale Property of Exponential Semimartingales: A Note on Explicit Conditions and Applications to Financial Models}

\author[D. Criens]{David Criens}
\address{D. Criens - Technical University of Munich, Department of Mathematics, Germany}
\email{david.criens@tum.de}
\author[K. Glau]{Kathrin Glau}
\address{K. Glau - Technical University of Munich, Department of Mathematics, Germany}
\email{kathrin.glau@tum.de}
\author[Z. Grbac]{Zorana Grbac}
\address{Z. Grbac - Universit\'e Paris Diderot, Laboratoire de Probabilit\'es et Mod\`eles Al\'eatoires, France}
\email{grbac@math.univ-paris-diderot.fr}

\keywords{Exponential semimartingale, 
  martingale property, uniform integrability, semimartingale asset price model, Libor model}

\subjclass[2010]{60G48, 91G30, 91B25}

\thanks{The authors thank Ernst Eberlein for useful discussions. The financial support from the DFG project EB66/9-3 and from the Entrepreneurial Program, Women for Math Science, TU Munich is gratefully acknowledged.}

\date{\today}
\maketitle

\frenchspacing
\pagestyle{myheadings}

\begin{abstract}
We give a collection of explicit sufficient conditions for the true martingale property of a wide class of exponentials of semimartingales. We express the conditions in terms of semimartingale characteristics. This turns out to be very convenient in financial modeling in general. Especially it allows us to carefully discuss the question of well-definedness of semimartingale Libor models, whose construction crucially relies on a sequence of measure changes.
\end{abstract}

\section{Introduction}
Local martingales are the core object of stochastic integration. 
Thus they provide a natural access to time evolutionary stochastic modeling, which is a cornerstone of mathematical finance.
The fundamental theorem of asset pricing states that the absence of arbitrage is essentially equivalent to the local martingale property of discounted asset prices under some equivalent probability measure.
One important benefit of the true martingale property of discounted asset price processes is their use for density processes of a change of measure. 
In financial terms this corresponds to a change of numeraire. 
Since the seminal work of \citet{GemanElKarouiRochet95} this concept became indispensable for both computational and modeling aspects.
Often a change of numeraire facilitates option pricing by reducing complexity of computations. Moreover, it is a building stone of the construction of Libor market models introduced by \citet{BraceGatarekMusiela97} and \citet{MiltersenSandmannSondermann97}. More fundamentally, a change of measure connects historical and risk-neutral probability measures. 
On the other hand if the discounted asset price process is a strict local martingale, i.e. a local martingale which is not a true martingale, 
this is sometimes interpreted as financial bubble.
However, the definition and existence of financial bubbles critically depends on the specific notion of the market price, arbitrage and admissible strategies, see for example \citet{CoxHobson2005} and \citet{JarrowProtterShimbo2010}. 
In a typical modeling situation it is enjoyable to work with true martingales. 

Usually price processes are non-negative and therefore are modeled as exponentials
of semimartingales, which form a wide and flexible class of positive processes. 
One can
characterize the local martingales in this class by a drift condition.
It is, however, more involved to identify conditions for their true martingale property. 
In order to formulate the problem more precisely denote by $X$ an $\erd$-valued semimartingale and by $\lambda$ an $\erd$-valued predictable process which is integrable with respect to $X$. Then $\lambda \cdot X:=\sum_{i\leq d} \int_{0}^{\cdot} \lambda^{i} \ud X^{i}$ denotes the real-valued stochastic integral process of $\lambda$ with respect to $X$. %
Moreover, let $V$ be a predictable process with finite variation. 
We pose the following question: \emph{Under which conditions on the characteristics of $X$ is a
real-valued semimartingale $Z$ of the form
\[
Z:=e^{\lambda \cdot X - V}
\]
a (uniformly integrable) martingale?}

If \(e^{\lambda \cdot X}\) is a special semimartingale, there exists a unique predictable process of finite variation \(V\) such that \(Z\) is a local martingale.
In this case, $V$ is called the \emph{exponential compensator} of $\lambda\cdot X$, see Section 2 for details.
Various criteria for the more delicate true martingale property of $Z$  have been proposed.
The seminal paper by \citet{Novikov72} treats the continuous semimartingale case. Sufficient conditions for general semimartingales are provided for example in \citet{LepingleMemin78}, \citet{KallsenShiryaev02}, \citet{Jacod79}, \citet{CheriditoFilipovicYor2005} and \citet{ProtterShimbo08}, see also a recent paper by \citet{LarssonRuf14} for further generalizations of Novikov-Kazamaki type conditions based on convergence results for local supermartingales.  Moreover, we refer to Section 1 and Section 3 of \citet{KallsenShiryaev02} for an exhaustive literature overview. In the special case when $X$ is a process with independent increments and absolutely continuous characteristics and $\lambda$ deterministic,
\citet{EberleinJacodRaible05} show that if $Z$ is a local martingale, it is also a true martingale. Deterministic conditions ensuring the martingale property of an exponential of an affine process are given in \citet{KallsenMuhleKarbe10}.
The conditions for more general semimartingales are not as explicit. 

Our contribution is to give 
explicit conditions for the martingale property of an exponential quasi-left continuous semimartingale in terms of its characteristics.
In Section 2 we introduce the notation and describe the general semimartingale setting following \citet{JacodShiryaev03}.
Section 3 contains the main results.
The advantage of the explicit conditions is their convenience for applications. 
We illustrate this by investigating the true martingale property of asset prices in semimartingale stochastic volatility models in Section \ref{SVAPM}.
Finally, in Section \ref{s:Libor} we prove the well-definedness of the backward construction of L\'evy Libor models. More precisely, we show that the candidate density processes for the measure changes are indeed true martingales which has not been rigorously proved earlier. Moreover, we present a natural extension to the semimartingale Libor model.

\section{Semimartingale notation and preliminaries}
\label{eq:notation}
In this section we introduce the notation and summarize the basic notions and facts from the semimartingale theory in order to keep the paper self-contained. 
Our main reference is \citet{JacodShiryaev03}, whose notation we use throughout the paper. Other standard references for stochastic calculus and semimartingales are e.g. \citet{Jacod79}, \citet{Metivier82} and \citet{Protter04}.

Let  $(\Omega,\lijepof, (\lijepof_t)_{t\geq 0}, \P)$ denote a stochastic basis, i.e. a filtered probability space with right-continuous filtration. For a class of processes \(\mathcal{C}\), we say that a process \(X\) is in the localized class \(\mathcal{C}_\textup{loc}\) if there exits a sequence of stopping times \((\tau_n)_{n \in \mathbb{N}}\) such that a.s. \(\tau_n \uparrow \infty\) as \(n \to \infty\) and \(X^{\tau_n} \in \mathcal{C}\). Denote by \(\mathcal{M}\) the class of c\`adl\`ag uniformly integrable martingales. The processes in the localized class $\lijepom_{\loc}$ are called \emph{local martingales}.
We denote by $\lijepov^{+}$ (resp. $\lijepov$) the set of all real-valued  c\`adl\`ag processes starting from zero that have \emph{non-decreasing paths} (resp. \emph{paths with finite variation} over each finite interval $[0,t]$). Let $\lijepoa^{+}$ denote the set of all processes $A \in \lijepov^{+}$ that are integrable, i.e. such that $\E[A_{\infty}]< \infty$, where $A_{\infty}(\omega):=\lim_{t \to \infty} A_{t}(\omega)\in\overline{\er}_+$ for every $\omega\in\Omega$. Moreover, let $\lijepoa$ denote the set of all $A \in \lijepov$ that have integrable variation, i.e.  $\Var(A) \in \lijepoa^{+}$, where for every $t \geq 0$ and every $\omega\in\Omega$, $\Var(A)_{t}(\omega)$ is defined as the total variation of the function $s \mapsto A_{s}(\omega)$ on $[0,t]$.
A process $X$ is called a \emph{semimartingale} if it has a decomposition of
the form
\begin{equation}\label{canonicaldec}
X=X_{0}+M+A,
\end{equation}
where $X_{0}$ is finite-valued and $\lijepof_{0}$-measurable, $M \in \lijepom_{\loc}$ with $M_{0}=0$ and $A \in \lijepov$. If $A$ in decomposition \eqref{canonicaldec} is predictable, $X$ is called a \emph{special semimartingale} and the decomposition is unique.
A semimartingale is called \textit{quasi-left continuous} if a.s. \(\Delta X_\tau = 0\) on the set \(\{\tau < \infty\}\) for all predictable times \(\tau\).

Let $X$ be an $\erd$-valued semimartingale, i.e. each component of $X$ satisfies \eqref{canonicaldec}. Denoting by $\varepsilon_a$ the Dirac measure at point $a$, the random measure of jumps $\mu^{X}$ of $X$ is an integer-valued random measure of the form
$$
\mu^{X}(\omega; \de t, \de x) := \sum_{s\geq0} \indik_{\{\Delta X_{s}(\omega) \neq 0\}} \varepsilon_{(s, \Delta X_{s}(\omega))} (\de t, \de x).
$$
There is a version of the predictable compensator of \(\mu^X\), denoted by \(\nu\), such that the \(\mathbb{R}^d\)-valued semimartingale \(X\) is quasi-left continuous if and only if \(\nu(\omega, \{t\} \times \mathbb{R}^d) = 0\) for all \(\omega \in \Omega\), c.f. \citet{JacodShiryaev03}, Corollary II.1.19.

In general, \(\nu\) satisfies
\begin{equation}
 \label{compensator-property}
(|x|^{2} \wedge 1) \ast \nu \in \lijepoa_{\loc}.
\end{equation}
The semimartingale  $X$ admits a \emph{canonical representation}
$$
X=X_{0}+B(h)+X^{c}+(x-h(x)) \ast \mu^{X} + h(x) \ast (\mu^{X}-\nu),
$$
where $h:\erd \to \erd$ is a \emph{truncation function}, i.e. a function that is bounded and behaves like $h(x)=x$ around 0, $B(h)$ is a predictable $\erd$-valued process with components in $\lijepov$, and $X^{c}$ is the continuous martingale part of $X$.

Denote by $C$ the predictable $\er^{d} \otimes \mathbb{R}^d$-valued covariation process defined as $C^{ij}:=\la X^{i,c}, X^{j,c} \ra$. 
Then the triplet $(B(h), C, \nu)$ is called the \emph{triplet of predictable characteristics} of $X$ (or simply the \emph{characteristics} of $X$). 
It can be shown (see Proposition II.2.9 in \citet{JacodShiryaev03}) that  there exists a predictable process $A\in \lijepoa_{\loc}^{+}$ such that
$$
 B(h) = b(h)\cdot A, \quad C = c\cdot A,\quad \nu = A\times F\,,
$$
where $b(h)$ is a $d$-dimensional predictable process, $c$ is a predictable process taking values in the set of symmetric non-negative definite $d\times d$-matrices and $F$ is a transition kernel from $(\Omega \times \er_{+}, \lijepop)$ into $(\erd, \lijepob(\erd))$. Here $\lijepop$ denotes the predictable $\sigma$-field on $\Omega \times \er_{+} $. We call $(b(h),c,F; A)$ the \emph{triplet of differential} (or \emph{local}) \emph{characteristics} of $X$.
If $X$ admits the choice $A_{t}=t$ above, we say that $X$ has \emph{absolutely continuous} characteristics (or shortly AC) and call \(X\) an \textit{It\^o semimartingale}.

An important subclass of semimartingales is the class of It\^o semimartingales with independent increments. These processes are known as time-inhomogeneous L\'evy processes or as Processes with Independent Increments and Absolutely Continuous characteristics (PIIAC), see e.g. Section 2 in \citet{EberleinJacodRaible05}. The differential characteristics $(b(h),c,F)$ of a PIIAC $X$, for every truncation function $h$, are deterministic and satisfy the following integrability assumption: For every $T>0$
\begin{align}
\label{eq:cond-2-PIIAC}
\int_0^T \bigg( |b(h)_s| + \| c_s\| + \int_{\erd} (|x|^2 \wedge 1) F_s(\dx)\bigg) \ds < \infty,
\end{align}
where $\|\cdot\|$ denotes any norm on the set of $d\times d$-matrices. For every $t>0$, the law of $X_t$ is characterized by a L\'evy-Khintchine type formula for its characteristic function, see again Section 2 in \citet{EberleinJacodRaible05}. This property makes the class of PIIAC particularly suitable for applications.
The following definition and results on exponentials of semimartingales are given in Definition 2.12, Lemma 2.13 and Lemma 2.15 in \citet{KallsenShiryaev02}.
\begin{definition}
A real-valued semimartingale $Y$ is called \emph{exponentially special} if $\exp(Y-Y_{0})$ is a special semimartingale.
\end{definition}
\begin{remark}\label{rem-expocomp}
Let $Y$ be a real-valued semimartingale and denote by  $\nu^{Y}$ the compensator of the random measure of jumps of $Y$ and $h$ a truncation function.
\begin{itemize}
\item[(a)] The following statements are equivalent:
\begin{itemize}
 \item[(i)] $Y$ is an exponentially special semimartingale.
 \item[(ii)]$(e^{y}-1-h(y)) \ast \nu^{Y} \in \lijepov$.
 \item[(iii)]   $ e^{y} \indik_{\{y > 1\}} \ast \nu^{Y} \in \lijepov$.
\end{itemize}
\item[(b)]
If $Y$ is exponentially special, then it admits an \emph{exponential compensator}, i.e. there exists a predictable process $V \in \lijepov$ such that $\exp(Y-Y_{0}-V) \in \lijepom_{\loc}$.
\end{itemize}
\end{remark}

Let $X$ be an $\erd$-valued semimartingale with differential characteristics $(b(h),c,F;A)$ and $\lambda\in L(X)$, where $L(X)$ denotes the set of predictable processes integrable with respect to $X$, c.f. \citet{JacodShiryaev03}, page 207. Moreover, assume that $\lambda\cdot X$ is exponentially special.
Following \citet{JacodShiryaev03}, Section III.7.7a we  define the \emph{Laplace cumulant process}
\begin{align}
\label{eq:exp-comp-1}
\widetilde{K}^{X}(\lambda) & := \widetilde{\kappa}^{X} (\lambda) \cdot A,
\end{align}
where
\begin{align}
\label{eq:exp-comp-2}
\widetilde{\kappa}^{X}_{s} (\lambda) & := \la \lambda_{s}, b_{s} \ra + \frac{1}{2} \la \lambda_{s}, c_{s} \lambda_{s}\ra + \int \big(e^{\la \lambda_{s},x\ra}-1- \la \lambda_{s}, h(x)\ra\big) F_{s}(\de x),
\end{align}
and the \emph{modified Laplace cumulant process}  
$K^{X}(\lambda)  := \ln (\lijepoe(\widetilde{K}^{X}(\lambda)))$, where $\lijepoe$ denotes the stochastic exponential, and 
\begin{eqnarray}\label{laplace cumulant process}
K^{X}(\lambda) 
& = & \widetilde{K}^{X}(\lambda)  + \sum_{s \leq \cdot} (\ln (1 + \Delta \widetilde{K}^{X}_{s}(\lambda)) - \Delta \widetilde{K}^{X}_{s}(\lambda)).
\end{eqnarray}

The following results are proved in Proposition III.7.14 and Theorem III.7.4 in  \citet{JacodShiryaev03}:
\begin{proposition}
\label{Laplace-cum}
Let $X$ be an $\erd$-valued semimartingale and $\lambda\in L(X)$ such that $\lambda\cdot X$ is exponentially special.
\begin{itemize}
\item[(i)] The modified Laplace cumulant process $K^{X}(\lambda)$ is the exponential compensator of $\lambda \cdot X$, i.e. the process $Z$ defined by
$$
Z:=\exp (\lambda \cdot X - K^{X}(\lambda))
$$
is a local martingale. 
\item[(ii)] If $X$ is quasi-left continuous,
the Laplace cumulant process $\widetilde{K}^{X}(\lambda)$ and the modified Laplace cumulant process $K^{X}(\lambda)$ coincide, i.e. $K^{X}(\lambda)=\widetilde{K}^{X}(\lambda)$.
\end{itemize}
\end{proposition}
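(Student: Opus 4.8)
The plan is to read everything off the definition $K^X(\lambda)=\ln(\lijepoe(\widetilde K^X(\lambda)))$, which gives $e^{-K^X(\lambda)}=1/\lijepoe(\widetilde K^X(\lambda))$ and hence
\[
Z=\frac{e^{\lambda\cdot X}}{\lijepoe(\widetilde K^X(\lambda))}.
\]
So (i) is the assertion that $e^{\lambda\cdot X}$ and the predictable finite-variation process $\lijepoe(\widetilde K^X(\lambda))$ carry the same multiplicative drift, so that their quotient is a local martingale. Throughout write $Y:=\lambda\cdot X$; by assumption $Y$ is exponentially special, so by Remark \ref{rem-expocomp} the process $e^{Y-Y_0}$ is a special semimartingale and $(e^y-1-h(y))\ast\nu^Y\in\lijepov$, which is the integrability we will need to turn compensated jump integrals into genuine local martingales.

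The first step is to apply It\^o's formula to $y\mapsto e^y$ along $Y$. Using $Y^c=\lambda\cdot X^c$ with $\la Y^c\ra=\la\lambda,c\lambda\ra\cdot A$, and that $\nu^Y$ is the image of $\nu$ under $(s,x)\mapsto(s,\la\lambda_s,x\ra)$, I would collect the predictable finite-variation part of $e^Y$ and show it equals $e^{Y_-}\cdot\widetilde K^X(\lambda)$. Concretely, the drift of the stochastic integral contributes $\la\lambda,b\ra$, the quadratic-variation term contributes $\tfrac12\la\lambda,c\lambda\ra$, and compensating $e^{\Delta Y}-1-\Delta Y$ contributes $\int(e^{\la\lambda,x\ra}-1-\la\lambda,h(x)\ra)F(\dx)$. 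The key bookkeeping is that the auxiliary truncation function chosen for $Y$ cancels between the drift of $\la\lambda,\cdot\ra$ and the jump compensator (using the transformation rule for the characteristics of $\lambda\cdot X$, c.f. \citet{JacodShiryaev03}), leaving exactly $\widetilde\kappa^X(\lambda)$ from \eqref{eq:exp-comp-2}. This yields $e^{Y-Y_0}=\lijepoe(N)$ with $N=\widetilde K^X(\lambda)+M'$ for a local martingale $M'$.

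With this representation, $Z=e^{Y_0}\,\lijepoe(N)/\lijepoe(\widetilde K^X(\lambda))$, and I would finish via Yor's formula for a ratio of stochastic exponentials. Since $\widetilde K^X(\lambda)$ is predictable of finite variation, its continuous martingale part vanishes and every bracket term against it reduces to a jump sum; a direct simplification then gives $\lijepoe(N)/\lijepoe(\widetilde K^X(\lambda))=\lijepoe(L)$ with
\[
L=M'-\sum_{s\le\cdot}\frac{\Delta\widetilde K^X_s(\lambda)}{1+\Delta\widetilde K^X_s(\lambda)}\,\Delta M'_s.
\]
Because $\widetilde K^X(\lambda)$ jumps only at predictable times and $\Delta\widetilde K^X(\lambda)/(1+\Delta\widetilde K^X(\lambda))$ is predictable, the subtracted sum is the integral of a predictable integrand against $M'$ and is itself a local martingale, so $L\in\lijepom_\loc$ and $Z=e^{Y_0}\lijepoe(L)\in\lijepom_\loc$. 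The one genuinely delicate point is that $\lijepoe(\widetilde K^X(\lambda))$ and its left limit must not vanish, i.e. $1+\Delta\widetilde K^X_s(\lambda)>0$; this should be drawn from exponential specialness by noting that at a predictable time $s$ one has $1+\Delta\widetilde K^X_s(\lambda)=\E[e^{\Delta Y_s}\mid\lijepof_{s-}]>0$, which is also exactly what makes $K^X(\lambda)=\ln\lijepoe(\widetilde K^X(\lambda))$ well defined.

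For (ii), I would use that quasi-left continuity of $X$ is equivalent to $\nu(\omega,\{t\}\times\erd)=0$ for all $t$ (as recalled from \citet{JacodShiryaev03}, II.1.19). Since $C$ is always continuous and $\Delta B(h)_t=\int h(x)\,\nu(\{t\}\times\dx)$ while $(|x|^2\wedge1)\ast\nu$ inherits the time-atoms of $\nu$, this forces the operational process $A$ in $(b,c,F;A)$ to be continuous. Consequently $\widetilde K^X(\lambda)=\widetilde\kappa^X(\lambda)\cdot A$ is continuous, so $\Delta\widetilde K^X(\lambda)\equiv0$ and the correction sum in \eqref{laplace cumulant process} vanishes, giving $K^X(\lambda)=\widetilde K^X(\lambda)$. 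The main obstacle in the whole argument is the It\^o/truncation computation of the second step together with the positivity $1+\Delta\widetilde K^X(\lambda)>0$: the cancellation of the truncation function is routine but error-prone, and the integrability making the compensated jump terms true local martingales must be traced back to $(e^y-1-h(y))\ast\nu^Y\in\lijepov$.
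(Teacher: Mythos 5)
Your argument is correct in substance, but it is worth noting how it sits relative to the paper: the paper gives no proof of Proposition \ref{Laplace-cum} at all --- both parts are imported wholesale from \citet{JacodShiryaev03} (Theorem III.7.4 and Proposition III.7.14). What you have written is essentially a self-contained reconstruction of that Jacod--Shiryaev argument: the exponential transform $e^{Y-Y_0}=\lijepoe(N)$ with $N=\widetilde{K}^X(\lambda)+M'$ obtained from It\^o's formula together with the transformed characteristics of $Y=\lambda\cdot X$ (Proposition IX.5.3 in \citet{JacodShiryaev03}), the quotient of stochastic exponentials via Yor's formula, and the positivity $1+\Delta\widetilde{K}^X_s(\lambda)=\E\big[e^{\Delta Y_s}\mid\lijepof_{s-}\big]>0$ (finite exactly because $\lambda\cdot X$ is exponentially special, via Remark \ref{rem-expocomp}) are the same ingredients as in the cited proofs. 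Your version buys transparency about where exponential specialness enters; the citation buys that the technical bookkeeping is done once and for all in the reference.

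Two spots in your write-up are imprecise, though both are repairable and neither invalidates the plan. First, in part (ii), quasi-left continuity does not \emph{force} $A$ to be continuous: $A$ is not unique, and one can always enlarge it by jumps at whose times $b$, $c$, $F$ vanish. The correct statement is either that $A$ \emph{can be chosen} continuous (this is the content of the quasi-left continuity characterization around Proposition II.2.9 in \citet{JacodShiryaev03}), or, for an arbitrary version of $A$, that at any $s$ with $\Delta A_s>0$ the identities $b_s\Delta A_s=\Delta B_s=0$, $c_s\Delta A_s=\Delta C_s=0$ and $F_s(\cdot)\,\Delta A_s=\nu(\{s\}\times\cdot)=0$ force $\widetilde{\kappa}^X_s(\lambda)\,\Delta A_s=0$; either way $\widetilde{K}^X(\lambda)$ is continuous and the correction sum in \eqref{laplace cumulant process} vanishes, which is your conclusion. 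Second, in part (i) the integrand $H_s:=\Delta\widetilde{K}^X_s(\lambda)/(1+\Delta\widetilde{K}^X_s(\lambda))$ is predictable but not locally bounded (it diverges as $\Delta\widetilde{K}^X(\lambda)$ approaches $-1$), so ``predictable, hence the compensated sum is a local martingale'' needs one more sentence: the jump times of the predictable finite-variation process $\widetilde{K}^X(\lambda)$ are exhausted by a sequence of predictable times $(\tau_n)$, $H_{\tau_n}$ is $\lijepof_{\tau_n-}$-measurable, $\E[\Delta M'_{\tau_n}\mid\lijepof_{\tau_n-}]=0$, and a localization ensuring integrability of $H_{\tau_n}\Delta M'_{\tau_n}$ then makes the thin sum a local martingale. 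With these two repairs your proof is complete and matches the argument the paper delegates to the literature.
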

%
In the following section we give sufficient conditions for the martingale property of exponential semimartingales. 
\section{The Martingale Property of Exponential Semimartingales}
\label{s:main}
Integrability conditions ensuring the (UI) martingale property of a non-negative or positive local martingale were studied from many perspectives and in various levels of generality.
It started with the classical conditions of \citet{Novikov72} which applies to continuous exponential local martingales. 
A natural generalization included jumps was given in the seminal paper of \citet{LepingleMemin78}. 
Various related conditions are given by \citet{KallsenShiryaev02}. A profound overview on Novikov-type conditions as well as boundedness conditions is given in the monograph of \citet{Jacod79}.
In this section we collect conditions for exponential semimartingales and express them in terms of semimartingale characteristics.
Thanks to these expression we usually call these type of conditions \emph{predictable conditions}. Let us start with a Novikov-type integrability condition which is based on the main result of \citet{LepingleMemin78}.
We follow its statement given by \citet{Jacod79} as Corollary 8.44. 
\begin{proposition}
\label{le:martY}
Let \(Y\) be a real-valued quasi-left continuous semimartingale with characteristics \((B(h), C, \nu)\).
If  
\begin{enumerate}[label={\rm (A\arabic{*})}]
 \item\label{secondI0T}$ \E \left(\exp \left\{\frac{1}{2} C_{T} +
((y-1)e^{y}+1)\ast \nu_{T} \right\} \right) < \infty$ for every $T\ge0$,\\[-1.5ex]
\end{enumerate}
the process
$
M:=e^{ Y-K^{Y}(1)}
$
is a true martingale. Moreover, replacing condition \ref{secondI0T} with\\[-3ex]
\begin{enumerate}[label={\rm (A\arabic{*}\ensuremath{{}})}]
\stepcounter{enumi}
 \item\label{secondI01} $ 
\E \left(\exp \left\{\frac{1}{2} C_{\infty} +
((y-1)e^{y}+1)\ast \nu_{\infty} \right\} \right) < \infty$,
\end{enumerate}
$M$ is a UI martingale.
\end{proposition}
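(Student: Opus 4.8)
The plan is to recognize conditions \ref{secondI0T} and \ref{secondI01} as the translation, into the characteristics $(C,\nu)$ of $Y$, of the Novikov--Lépingle--Mémin criterion for stochastic exponentials, and then to quote Corollary 8.44 of \citet{Jacod79}. By Proposition \ref{Laplace-cum}(i) the process $M=e^{Y-K^{Y}(1)}$ is a strictly positive local martingale, hence a supermartingale; the point is to upgrade it to a (uniformly integrable) martingale, and the whole difficulty lies in matching the data of $M$ to the data entering the criterion.

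Since $M>0$, I would write $M=M_{0}\,\lijepoe(N)$ with $N:=\int_{0}^{\cdot}M_{s-}^{-1}\,\ud M_{s}$ a local martingale satisfying $\Delta N>-1$ and $N_{0}=0$; the $\lijepof_{0}$-measurable initial factor is immaterial for the (UI) martingale property, so it suffices to treat $\lijepoe(N)$. The criterion is phrased in terms of two objects attached to $N$: the continuous angle bracket $\la N^{c},N^{c}\ra$ and the predictable compensator $\nu^{N}$ of its jump measure, and I would read off both from the characteristics of $Y$. Quasi-left continuity forces the predictable process $A$, and therefore $\widetilde K^{Y}(1)$ and (by Proposition \ref{Laplace-cum}(ii)) $K^{Y}(1)=\widetilde K^{Y}(1)$, to be continuous. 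Since $\log M=Y-K^{Y}(1)$ differs from $Y$ by a continuous finite variation process, $M$ inherits the jumps of $e^{Y}$, giving $\Delta N=\Delta M/M_{-}=e^{\Delta Y}-1$ and $N^{c}=Y^{c}$, so that $\la N^{c},N^{c}\ra=C$ and $\nu^{N}$ is the image of $\nu$ under the map $y\mapsto e^{y}-1$.

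The matching then rests on the elementary identity, with $g(u):=(1+u)\log(1+u)-u$,
\[
g(e^{y}-1)=y\,e^{y}-(e^{y}-1)=(y-1)e^{y}+1,
\]
so that $g(\Delta N)\ast\nu^{N}=((y-1)e^{y}+1)\ast\nu$ by the push-forward of $\nu$, and the Lépingle--Mémin exponent becomes exactly
\[
\tfrac12\,\la N^{c},N^{c}\ra+g(\Delta N)\ast\nu^{N}=\tfrac12\,C+((y-1)e^{y}+1)\ast\nu.
\]
I would also record that exponential specialness of $Y$ guarantees, through Remark \ref{rem-expocomp}, that this compensated jump integral lies in $\lijepov$, so the exponent is finite-valued and the conditions are meaningful. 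With these identifications, \ref{secondI0T} is precisely the hypothesis of Corollary 8.44 in \citet{Jacod79} on each interval $[0,T]$ and yields the true martingale property of $M$, while \ref{secondI01} is its counterpart at $T=\infty$ and yields the uniformly integrable martingale property.

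I expect the only genuine obstacle to be the bookkeeping of the second step: proving carefully that quasi-left continuity makes $K^{Y}(1)$ continuous, so that no spurious jump corrections enter $\Delta N$, and that $\nu^{N}$ is indeed the push-forward of $\nu$. Once the pair $(\la N^{c},N^{c}\ra,\nu^{N})$ is correctly expressed through $(C,\nu)$, the algebraic identity and the invocation of \citet{Jacod79} are immediate.
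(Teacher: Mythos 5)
Your proposal is correct and follows the same skeleton as the paper's proof: both funnel the statement into Corollary 8.44 of \citet{Jacod79} (the L\'epingle--M\'emin criterion), and both handle the finite-horizon claim by the trivial reduction to the infinite-horizon one on each $[0,T]$. The only real difference is that where the paper delegates the translation between the exponential formulation $e^{Y-K^{Y}(1)}$ and the stochastic-exponential formulation $\lijepoe(N)$ to Theorem 2.19 of \citet{KallsenShiryaev02}, you carry it out by hand --- and you do so correctly: quasi-left continuity kills the jumps of $B$ and of $(e^{y}-1-h(y))\ast\nu$, so $\widetilde K^{Y}(1)=K^{Y}(1)$ is continuous, whence $\Delta N=e^{\Delta Y}-1$, $N^{c}=Y^{c}$, $\nu^{N}$ is the push-forward of $\nu$ under $y\mapsto e^{y}-1$, and the identity $g(e^{y}-1)=(y-1)e^{y}+1$ for $g(u)=(1+u)\log(1+u)-u$ reproduces exactly the exponent in \ref{secondI0T} and \ref{secondI01}. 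Your version is more self-contained at the cost of the bookkeeping you yourself flag.

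One point should be tightened: you invoke exponential specialness of $Y$ as if it were given, but it is not a hypothesis of the proposition, and it is needed before anything else so that $K^{Y}(1)$ exists and $M$ is a local martingale. It must be \emph{deduced} from \ref{secondI0T}, as the paper does: since $(y-1)e^{y}+1\ge 0$, finiteness of the expectation forces $((y-1)e^{y}+1)\ast\nu_{T}<\infty$ a.s., and since $(y-1)e^{y}+1$ dominates a positive multiple of $e^{y}$ on $\{y>1\}$, this gives $e^{y}\indik_{\{y>1\}}\ast\nu\in\lijepov$, hence exponential specialness by Remark \ref{rem-expocomp}(a)(iii). With that one-line repair the direction of implication is right and your argument is complete.
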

\begin{proof}
Note that the characteristics of \(X^T\), for any \(T > 0\), are given by \((B^T, C^T, \nu^T)\), where \(\nu^T(\de t, \de x) := \indik_{[0, T] \times \mathbb{R}^d} \nu (\de t, \de x)\). 
Now, since local martingales whose localizing sequence is deterministic are martingales, the first claim follows immediately from the second. 
Note that \ref{secondI0T} implies that \(X\) is exponentially special and hence that \(M\) is a local martingale. 
In view of Theorem 2.19 in \citet{KallsenShiryaev02}, the second claim follows from \citet{Jacod79}, Corollary 8.44.
\end{proof}
\begin{remark}
If \(Y\) is continuous, i.e. $\nu=0$, then \ref{secondI01} reduces to the classical Novikov condition as presented in Section 3.5.D in \citet{KaratzasShreve}.
\end{remark}
As an immediate corollary we derive the follows sufficient conditions for the case where \(X\) is given as a stochastic integral.
\begin{corollary}
\label{local_martingale}
Let $X$ be an $\erd$-valued quasi-left continuous semimartingale with differential characteristics $(b(h),c,F;A)$ and $\lambda \in L(X)$. 
If
\\[-3ex]
\begin{enumerate}[label={\rm (B\arabic{*})}]
 \item\label{eq:general-case2-b} for every $T\ge0$,
\begin{center}\(
\hspace{-0.75cm} \E\big(\exp \big\{\frac{1}{2} \int_{0}^{T} \la \lambda_{s}, c_{s} \lambda_{s} \ra \de A_{s} + \int_{0}^{T} \int_{\erd} ((\la \lambda_{s}, x\ra -1) e^{\la \lambda_{s}, x\ra}+1)F_{s}(\de x) \de A_{s} \big\}\big)< \infty
\)\end{center}
\end{enumerate}
the process
$
M:=e^{\lambda\cdot X-K^{X}(\lambda)}
$
is a true martingale. Moreover, replacing \ref{eq:general-case2-b} with\\[-2ex]
\begin{enumerate}[label={\rm (B\arabic{*}\ensuremath{{}})}]
\stepcounter{enumi}
 \item\label{eq:general-case2} 
\(\E\big(\exp\big\{\!\frac{1}{2} \!\int_{0}^{\infty} \la \lambda_{s}, c_{s} \lambda_{s} \ra \de A_{s} \! + \int_{0}^{\infty} \int_{\erd} ((\la \lambda_{s}, x\ra -1) e^{\la \lambda_{s}, x\ra}\!+1)F_{s}(\de x) \de A_{s}\big\}\big) < \infty,
\)
\end{enumerate}
then $M$ is a UI martingale.
\end{corollary}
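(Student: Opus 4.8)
The plan is to apply Proposition~\ref{le:martY} to the real-valued semimartingale $Y := \lambda \cdot X$, so that the whole task reduces to expressing the characteristics $(B^Y, C^Y, \nu^Y)$ of $Y$ in terms of the differential characteristics $(b(h), c, F; A)$ of $X$ and then reading off condition~\ref{secondI0T} (resp.~\ref{secondI01}) as condition~\ref{eq:general-case2-b} (resp.~\ref{eq:general-case2}). Three things must be checked: that $Y$ is quasi-left continuous, that $C^Y$ and $\nu^Y$ turn condition (A1)/(A2) into (B1)/(B2), and that $M = e^{Y - K^Y(1)}$ really is the process $e^{\lambda\cdot X - K^X(\lambda)}$.

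First I would record the characteristics of the stochastic integral, using the standard transformation rules, cf.\ \citet{JacodShiryaev03}. Since $Y^c = \lambda \cdot X^c$, the continuous part is $C^Y = \la \lambda, c\lambda\ra \cdot A$, and the jump compensator $\nu^Y$ is the image of $\nu$ under the predictable map $(s,x) \mapsto (s, \la \lambda_s, x\ra)$. Quasi-left continuity of $Y$ is then inherited from that of $X$: because $\nu^Y(\omega, \{t\}\times\er) \le \nu(\omega,\{t\}\times\erd) = 0$, we get $\nu^Y(\omega,\{t\}\times\er)=0$, which is the criterion from Corollary II.1.19 cited in the preliminaries.

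Next I would translate the condition. The continuous part is immediate, $\frac{1}{2}C^Y_T = \frac{1}{2}\int_0^T \la\lambda_s, c_s\lambda_s\ra \de A_s$. For the jump part, pushing forward through the image measure gives $((y-1)e^y+1)\ast\nu^Y_T = \int_0^T\int_{\erd}((\la\lambda_s,x\ra - 1)e^{\la\lambda_s,x\ra}+1)\,F_s(\de x)\,\de A_s$. A small but important point is that the integrand $(y-1)e^y+1$ vanishes at $y=0$, so it is irrelevant whether the image measure charges $\{y=0\}$ (arising from jumps of $X$ orthogonal to $\lambda_s$); the restriction to $\{y\ne0\}$ inherent in the definition of $\nu^Y$ causes no discrepancy. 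Hence (A1) for $Y$ is precisely (B1), and (A2) is precisely (B2).

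Finally I need $K^Y(1) = K^X(\lambda)$, which I would verify directly at the level of Laplace cumulants rather than by invoking uniqueness of the exponential compensator. Computing $\widetilde{\kappa}^Y_s(1)$ from the characteristics of $Y$, the drift of $Y$ acquires a truncation-correction term $\int(h'(\la\lambda_s,x\ra) - \la\lambda_s, h(x)\ra)\,F_s(\de x)$, which cancels exactly against the linearizing term when $\int(e^y - 1 - h'(y))\,F^Y_s(\de y)$ is pushed forward, leaving $\widetilde{\kappa}^Y_s(1) = \widetilde{\kappa}^X_s(\lambda)$ and hence $\widetilde K^Y(1) = \widetilde K^X(\lambda)$. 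Since both $X$ and $Y$ are quasi-left continuous, Proposition~\ref{Laplace-cum}(ii) upgrades this to $K^Y(1) = K^X(\lambda)$, so that $M$ coincides with the stated exponential, and the conclusion follows from Proposition~\ref{le:martY}. I expect the most delicate steps to be the rigorous handling of the image measure $\nu^Y$ and the truncation-correction cancellation in the cumulant identity; the remainder is substitution.
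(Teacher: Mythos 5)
Your proposal is correct and takes essentially the same route as the paper, whose two-line proof simply cites Proposition IX.5.3 of \citet{JacodShiryaev03} for the characteristics of $\lambda\cdot X$ and then applies Proposition \ref{le:martY}. The details you supply---inheritance of quasi-left continuity via $\nu^Y(\{t\}\times\er)\le\nu(\{t\}\times\erd)=0$, the pushforward identity for the jump part (noting $(y-1)e^y+1$ vanishes at $y=0$), and the truncation-correction cancellation yielding $K^Y(1)=K^X(\lambda)$---are exactly the steps the paper leaves implicit.
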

\begin{proof}
The characteristics of \(\lambda \cdot X\) are given by Proposition IX.5.3 in \citet{JacodShiryaev03}. Now the claim follows by an application of Proposition \ref{le:martY}.
\end{proof}
\begin{remark}
Obviously, by considering an real-valued semimartingale \(X\) and \(\lambda = 1\) in Corollary \ref{local_martingale} we recover Proposition \ref{le:martY}.
\end{remark}
For applications the following boundedness condition turns out to be useful, see for instance Corollary \ref{coro 4.4}, Proposition \ref{prop-liboruimart} and \ref{r:semimartingale-Libor}  in the sections below.
\begin{proposition}
\label{local_martWithConst}
Let $X$ 
be as in Corollary \ref{local_martingale} and let \(\lambda \in L(X)\). 
If
\\[-2ex]
\begin{enumerate}[label={\rm (C\arabic{*})}]
 \item\label{eq:condition-Libor-case1T} for every $T\ge0$, there exists a non-negative constant $\kappa(T)$ such that a.s.\\[-2.5ex]
\begin{eqnarray*}
\int_{0}^{T}  \la \lambda_{s}, c_{s} \lambda_{s}\ra  \de A_{s} + \int_0^T \int_{\mathbb{R}^d} \left(1 - \sqrt{e^{\langle \lambda_s, x\rangle}}\ \right)^2 F_s(\de x)\de A_s \leq \kappa (T) 
\end{eqnarray*} \\[-4.5ex]
\end{enumerate}
the process
$
M:=e^{\lambda\cdot X-K^{X}(\lambda)}
$
is a true martingale. Moreover, replacing \ref{eq:condition-Libor-case1T}
with\\[-2ex]
\begin{enumerate}[label={\rm (C\arabic{*}\ensuremath{{}})}]
\stepcounter{enumi}
 \item\label{eq:condition-Libor-case1} there exists a non-negative constant $\kappa$ such that a.s.
\begin{eqnarray*}
\int_{0}^{\infty}  \la \lambda_{s}, c_{s} \lambda_{s}\ra \de A_{s}  + \int_{0}^{\infty} \int_{\erd} \left(1 - \sqrt{e^{\langle \lambda_s, x\rangle}}\ \right)^2 F_s(\de x)\de A_s \leq \kappa
\end{eqnarray*}\\[-4.5ex]
\end{enumerate}
then $M$ is a UI martingale.
\end{proposition}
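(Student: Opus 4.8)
The plan is to use a square-root trick: I would compare $M$ with the exponential local martingale associated with $\lambda/2$, whose square reproduces $M$ up to a bounded multiplicative factor that is exactly the quantity appearing in \ref{eq:condition-Libor-case1T}. Throughout, since $X$ is quasi-left continuous, Proposition \ref{Laplace-cum}(ii) lets me write $K^{X}(\lambda)=\widetilde{\kappa}^{X}(\lambda)\cdot A$ and $K^{X}(\lambda/2)=\widetilde{\kappa}^{X}(\lambda/2)\cdot A$. A direct estimate shows that on $\{\la\lambda_s,x\ra>1\}$ the integrand $(1-\sqrt{e^{\la\lambda_s,x\ra}})^{2}$ dominates $e^{\la\lambda_s,x\ra}$ up to a multiplicative constant (and analogously for $\lambda/2$), so by Remark \ref{rem-expocomp} condition \ref{eq:condition-Libor-case1T} forces both $\lambda\cdot X$ and $(\lambda/2)\cdot X$ to be exponentially special; hence by Proposition \ref{Laplace-cum}(i) both $M$ and $L:=e^{(\lambda/2)\cdot X-K^{X}(\lambda/2)}$ are nonnegative local martingales equal to $1$ at time $0$.

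The key step is an explicit cumulant computation. Inserting \eqref{eq:exp-comp-2} for $\lambda$ and for $\lambda/2$ and using the identity $e^{u/2}-\tfrac12 e^{u}-\tfrac12=-\tfrac12(1-e^{u/2})^{2}$ with $u=\la\lambda_s,x\ra$, the drift terms and the truncation terms cancel and one obtains
\[
h:=\tfrac12 K^{X}(\lambda)-K^{X}(\lambda/2)
=\Big(\tfrac18\la\lambda_\cdot,c_\cdot\lambda_\cdot\ra
+\tfrac12\int_{\erd}\big(1-\sqrt{e^{\la\lambda_\cdot,x\ra}}\,\big)^{2}F_\cdot(\de x)\Big)\cdot A .
\]
Consequently $h$ is nonnegative and nondecreasing, $M^{1/2}=L\,e^{-h}\le L$ and $L^{2}=M\,e^{2h}$; moreover $2h$ is dominated by the left-hand side of \ref{eq:condition-Libor-case1T} (the coefficients $\tfrac14$ and $1$ being no larger than the coefficient $1$ there), so that $2h_{t}\le\kappa(T)$ for all $t\le T$.

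The remainder is soft. As a nonnegative local martingale $M$ is a supermartingale, so $\E[M_{\tau}]\le1$ for every stopping time $\tau$; combined with $L^{2}=M e^{2h}\le e^{\kappa(T)}M$ on $[0,T]$ this gives $\sup_{\tau\le T}\E[L_{\tau}^{2}]\le e^{\kappa(T)}$. An $L^{2}$-bounded local martingale is a uniformly integrable, square-integrable martingale, so $L$ stopped at $T$ is a true martingale with $L_{\tau}=\E[L_{T}\mid\cF_{\tau}]$ for $\tau\le T$. Conditional Jensen then yields $L_{\tau}^{2}\le\E[L_{T}^{2}\mid\cF_{\tau}]$, and since the family $\{\E[L_{T}^{2}\mid\cF_{\tau}]:\tau\le T\}$ is uniformly integrable (being conditional expectations of the fixed integrable variable $L_{T}^{2}$), so is $\{L_{\tau}^{2}\}$, and therefore so is $\{M_{\tau}\}$ because $0\le M_{\tau}\le L_{\tau}^{2}$. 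A nonnegative local martingale whose stopped values form a uniformly integrable family is a true martingale; as $T$ was arbitrary this proves the first assertion. Repeating the estimate on $[0,\infty)$ with $\kappa$ in place of $\kappa(T)$ under \ref{eq:condition-Libor-case1} shows that $\{M_{\tau}\}$ is uniformly integrable over all finite stopping times, giving the uniformly integrable martingale property.

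I expect the main obstacle to be the exact cumulant computation, i.e. verifying that $\tfrac12 K^{X}(\lambda)-K^{X}(\lambda/2)$ equals the integrand of \ref{eq:condition-Libor-case1T} with precisely the constants $\tfrac18$ and $\tfrac12$, which is what makes the boundedness hypothesis the sharp one. A second, more delicate point is that mere $L^{2}$-boundedness of $L$ (rather than $L^{2+\varepsilon}$-boundedness) already suffices: this works only because $\{L_{\tau}^{2}\}$ can be dominated by the uniformly integrable family $\{\E[L_{T}^{2}\mid\cF_{\tau}]\}$.
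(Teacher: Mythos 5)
Your proof is correct, but it takes a genuinely different route from the paper's. The paper's own proof is a reduction-plus-citation: it observes that the first assertion follows from the second (by stopping at $T$), that condition \ref{eq:condition-Libor-case1T} implies $\lambda\cdot X$ is exponentially special, and then deduces everything from Theorem 2.19 in \citet{KallsenShiryaev02} together with Lemma 8.8 and Theorem 8.25 in \citet{Jacod79}, which is where the boundedness criterion with the Hellinger-type integrand $(1-\sqrt{e^{y}})^{2}$ lives. You instead reconstruct that external input from scratch. Your cumulant identity is exact: writing $u=\la \lambda_{s},x\ra$, the drift and truncation terms cancel and
\[
\tfrac12 K^{X}(\lambda)-K^{X}(\lambda/2)
=\Big(\tfrac18\la\lambda_{\cdot},c_{\cdot}\lambda_{\cdot}\ra
+\tfrac12\int_{\erd}\big(1-\sqrt{e^{\la\lambda_{\cdot},x\ra}}\,\big)^{2}F_{\cdot}(\de x)\Big)\cdot A,
\]
with quasi-left continuity entering exactly where you say, through Proposition \ref{Laplace-cum}(ii). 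The preliminary step also holds: on $\{u>1\}$ one has $(1-e^{u/2})^{2}=e^{u}(1-e^{-u/2})^{2}\ge (1-e^{-1/2})^{2}e^{u}$ (and the analogue for $\lambda/2$ on $\{u>2\}$), so \ref{eq:condition-Libor-case1T} does force exponential specialness of both $\lambda\cdot X$ and $(\lambda/2)\cdot X$ via Remark \ref{rem-expocomp}(a)(iii), making $M$ and $L$ honest nonnegative local martingales. The closing chain is sound as well: $\E[L_{\tau}^{2}]=\E[M_{\tau}e^{2h_{\tau}}]\le e^{\kappa(T)}$ by optional sampling for the nonnegative supermartingale $M$; $L^{2}$-boundedness over stopping times upgrades $L^{T}$ to a square-integrable martingale; and conditional Jensen plus domination by the uniformly integrable family $\{\E[L_{T}^{2}\mid\cF_{\tau}]\}$ shows $\{M_{\tau}:\tau\le T\}$ is uniformly integrable, i.e. $M$ is of class (D) on $[0,T]$, hence a true martingale there; the verbatim extension to $[0,\infty)$ gives the UI statement under \ref{eq:condition-Libor-case1}. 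Comparing the two: the paper's route is two lines and defers all analysis to the cited results, while yours is self-contained, uses only elementary martingale theory (supermartingale optional sampling, Jensen, UI by domination), and makes transparent \emph{why} the integrand $(1-\sqrt{e^{y}})^{2}$ with precisely these constants appears -- it is, up to the Gaussian term, twice the defect $\tfrac12 K^{X}(\lambda)-K^{X}(\lambda/2)$. As a by-product your computation even shows the coefficient of the term $\la\lambda_{s},c_{s}\lambda_{s}\ra$ in \ref{eq:condition-Libor-case1T} could be weakened to $\tfrac14$, a sharpening the citation-based proof does not reveal.
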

\begin{proof}
Again, the first part is an immediate consequence of the second.
Note that \ref{eq:condition-Libor-case1T} implies that \(\lambda \cdot X\) is exponentially special.
Hence, we can deduce the claim from Theorem 2.19 in \citet{KallsenShiryaev02} together with Lemma 8.8 and Theorem 8.25 in \citet{Jacod79}.
\end{proof}
\begin{remark}
Clearly, thanks to Corollary \ref{local_martingale}, the condition
\ref{eq:condition-Libor-case1T} could be replaced by the following condition: for every \(T \geq 0\) there exists a constant
$\kappa(T)$ such that a.s.\begin{align}\label{compare}
\int_{0}^{T} \la \lambda_{s}, c_{s} \lambda_{s} \ra \de A_{s} + \int_{0}^{T} \int_{\erd} \left((\la \lambda_{s}, x\ra -1) e^{\la \lambda_{s}, x\ra}+1\right)F_{s}(\de x) \de A_{s} \leq \kappa(T).
\end{align}
The elementary inequality
\begin{align*}
0 \leq (1 - \sqrt{x})^2 \leq x \log (x) - (x - 1),\textup{ for all } x > 0,
\end{align*}
as for instance noted in \citet{Esche}, Lemma 2.13, shows that condition \ref{eq:condition-Libor-case1T} is an improvement to \eqref{compare}.
\end{remark}
Let us shortly turn to the subclass of semimartingales with independent increments (SII processes), 
for which the situation is slightly different than in the more general case.
For exponential SII processes 
the  local martingale property is equivalent to the true martingale property.
From a mathematical finance perspective this interesting fact for instance implies that exponential SII models cannot include bubbles which are modeled as strict local martingales. 

Let us formalize this observation and add some simple deterministic conditions for the martingale property. 
The main implication \((ii) \Rightarrow (i)\) is essentially thanks to \citet{KallsenMuhleKarbe10}, Proposition 3.12. 
Note that the assertion does not require quasi-left continuity.

\begin{proposition}\label{PII local martingale}
Let \(X\) be an \(\mathbb{R}^d\)-valued semimartingale with deterministic characteristics \((B^X, C^X, \nu^X)\), \(\lambda \in L(X)\) be deterministic and \(M := e^{\lambda \cdot X - K^X(\lambda)}\). 
The following are equivalent
\begin{enumerate}
\item[(i)] \(M\)  is a martingale.
\item[(ii)] \(M\) is a local martingale.
\item[(iii)] \(\lambda \cdot X\) is exponentially special.
\item[(iv)] \((e^{\langle \lambda, x\rangle} - 1 - h(\langle \lambda, x\rangle)) * \nu^X \in \mathcal{V}\).
\item[(v)] \(e^{\langle \lambda ,x\rangle} \indik_{\{\langle \lambda, x\rangle > 1\}} * \nu^X \in \mathcal{V}\).
\end{enumerate}
\end{proposition}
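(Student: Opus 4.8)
The plan is to split the five statements into the block $\{$(iii),(iv),(v)$\}$, which is purely about exponential specialness, and the block $\{$(i),(ii)$\}$, which is about the (local) martingale property, and then to bridge the two via the exponential compensator. First I would dispatch (iii)$\Leftrightarrow$(iv)$\Leftrightarrow$(v) by applying Remark~\ref{rem-expocomp}(a) to the real-valued semimartingale $Y:=\lambda\cdot X$. Since $\lambda$ is deterministic, hence predictable, the jumps of $Y$ are $\Delta Y_s=\langle\lambda_s,\Delta X_s\rangle$, so the compensator $\nu^{Y}$ of the jump measure of $Y$ is the image of $\nu^{X}$ under $(s,x)\mapsto(s,\langle\lambda_s,x\rangle)$ (cf. Proposition IX.5.3 in \citet{JacodShiryaev03}). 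Consequently $(e^{y}-1-h(y))\ast\nu^{Y}=(e^{\langle\lambda,x\rangle}-1-h(\langle\lambda,x\rangle))\ast\nu^{X}$ and $e^{y}\indik_{\{y>1\}}\ast\nu^{Y}=e^{\langle\lambda,x\rangle}\indik_{\{\langle\lambda,x\rangle>1\}}\ast\nu^{X}$, so the equivalences of Remark~\ref{rem-expocomp}(a) translate verbatim into (iii)$\Leftrightarrow$(iv)$\Leftrightarrow$(v).

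Next I would connect the martingale statements to exponential specialness. The object $M=e^{\lambda\cdot X-K^{X}(\lambda)}$ is only meaningful once the exponential compensator $K^{X}(\lambda)$ exists, which by Remark~\ref{rem-expocomp}(b) happens precisely when $\lambda\cdot X$ is exponentially special; so (ii) already entails (iii). Conversely, assuming (iii), Proposition~\ref{Laplace-cum}(i) immediately gives that $M$ is a local martingale, i.e. (ii). Hence (ii)$\Leftrightarrow$(iii). The implication (i)$\Rightarrow$(ii) is trivial, since every martingale is a local martingale.

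It then remains to prove the main implication (ii)$\Rightarrow$(i): a local martingale of this exponential form is automatically a true martingale. Here I would exploit that deterministic characteristics force $X$ to be a process with independent increments (Theorem II.4.15 in \citet{JacodShiryaev03}). Because $K^{X}(\lambda)$ is deterministic, for $s\le t$ the ratio $M_t/M_s=\exp\big((\lambda\cdot X)_t-(\lambda\cdot X)_s-(K^{X}(\lambda)_t-K^{X}(\lambda)_s)\big)$ depends only on the increments of $X$ after time $s$ and is therefore independent of $\lijepof_s$. Being positive, $M$ is a supermartingale, and a supermartingale with constant expectation is a martingale; so it suffices to show $\E[M_t]=M_0=1$ for every $t$, which together with the supermartingale property forces $\E[M_t\mid\lijepof_s]=M_s$. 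This reduction is the clean part; the identity $\E[M_t]=1$ is the content I would import from Proposition 3.12 in \citet{KallsenMuhleKarbe10}.

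The main obstacle is precisely this identity in the absence of quasi-left continuity. When $X$ has fixed times of discontinuity, $K^{X}(\lambda)$ differs from the Laplace cumulant $\widetilde{K}^{X}(\lambda)$ by the correction $\sum_{s\le\cdot}\big(\ln(1+\Delta\widetilde{K}^{X}_{s}(\lambda))-\Delta\widetilde{K}^{X}_{s}(\lambda)\big)$ of \eqref{laplace cumulant process}, and one must verify that this correction is exactly what enforces $\E[M_t/M_s]=1$ across each deterministic jump time, where the conditional Laplace transform of the jump equals $1+\Delta\widetilde{K}^{X}_{s}(\lambda)$ so that $\Delta K^{X}_{s}(\lambda)=\ln(1+\Delta\widetilde{K}^{X}_{s}(\lambda))$. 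Checking that the continuous part and the fixed-time part combine, through the L\'evy--Khintchine bookkeeping for independent increments, to give $\E[M_t]=1$ is the technical heart; rather than redo it I would cite \citet{KallsenMuhleKarbe10}, Proposition 3.12, which is why the assertion needs no quasi-left continuity assumption.
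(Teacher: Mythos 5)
Your proposal is correct, and its skeleton coincides with the paper's: (i) \(\Rightarrow\) (ii) is trivial, (ii) \(\Leftrightarrow\) (iii) via the exponential compensator, (iii) \(\Leftrightarrow\) (iv) \(\Leftrightarrow\) (v) via Remark~\ref{rem-expocomp} and Proposition IX.5.3 of \citet{JacodShiryaev03}, and the heart of the matter is delegated to Proposition 3.12 of \citet{KallsenMuhleKarbe10}. Where you genuinely diverge is in how you bridge to that citation. The paper stays inside the calculus of characteristics: it shows that \(Y := \lambda \cdot X - K^X(\lambda)\) has deterministic characteristics via \citet{GollKallsen00}, Corollary 5.6, writes \(M = e^Y = \mathcal{E}(\overline{Y})\) via Theorem II.8.10 of \citet{JacodShiryaev03}, checks through Equation II.8.14 that \(\overline{Y}\) again has deterministic characteristics, and only then invokes Kallsen--Muhle-Karbe --- whose Proposition 3.12, as this usage indicates, is tailored to stochastic exponentials of PII-semimartingales. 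You instead use Theorem II.4.15 of \citet{JacodShiryaev03} to see that \(\lambda \cdot X\), and hence \(Y\) after subtracting the deterministic finite-variation process \(K^X(\lambda)\), has independent increments, and you reduce via ``positive local martingale \(\Rightarrow\) supermartingale'' and ``supermartingale with constant expectation \(\Rightarrow\) martingale'' to the single identity \(\E[M_t] = 1\). This reduction is correct and arguably more elementary than the paper's detour through Goll--Kallsen (subtracting a deterministic process visibly preserves independent increments, so Corollary 5.6 is not really needed); your independence observation for \(M_t/M_s\) is sound but in fact redundant once you have the constant-expectation argument. The one imprecision is the final import: since the cited Proposition 3.12 is phrased for \(\mathcal{E}(\overline{Y})\) with \(\overline{Y}\) a PII rather than as the identity \(\E[M_t]=1\) for exponentially compensated exponentials, a rigorous write-up of your route would still need the one-line conversion \(e^Y = \mathcal{E}(\overline{Y})\) with \(\Delta \overline{Y} = e^{\Delta Y} - 1 > -1\) and deterministic characteristics --- precisely the step the paper performs --- after which your supermartingale reduction becomes unnecessary, though harmless. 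In short, your approach trades the paper's characteristic-preservation bookkeeping for elementary martingale arguments, at the modest cost of a looser fit with the exact form of the cited result; your heuristic about fixed times of discontinuity, \(\Delta K^X_s(\lambda) = \ln(1 + \Delta \widetilde{K}^X_s(\lambda))\) matching the conditional Laplace transform across deterministic jump times, correctly identifies why no quasi-left continuity is needed.
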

\begin{proof}
The implication \((i)\) $\Rightarrow$ \((ii)\) is trivial and equivalence \((ii)\) $\Leftrightarrow$ \((iii)\) holds by definition. 
The equivalences of  \((iii),(iv)\) and \((v)\) are due to Remark \ref{rem-expocomp} and \citet{JacodShiryaev03}, Proposition IX.5.3, which shows that \(\lambda \cdot X\) has deterministic characteristics with
\begin{align*}
\nu^{\lambda \cdot X}(A, \de s) = \int_{\mathbb{R}^d} \indik_A\big(\langle \lambda_s, x\rangle\big) \nu^X(\de x, \de s),\quad A \in \mathcal{B}(\mathbb{R}\backslash \{0\}).
\end{align*}
It is left to show the implication \((iii)\) $\Rightarrow$ \((i)\).
In view of \eqref{laplace cumulant process} and since $\lambda\cdot X$ has deterministic characteristics, \(K^X(\lambda)\) is a deterministic process of finite variation and hence also has deterministic characteristics. 
Define \(f(x, y) := x - y\), then \(Y := \lambda \cdot X - K^X(\lambda) = f(\lambda \cdot X, K^X(\lambda))\).
It follows from \citet{GollKallsen00}, Corollary 5.6 applied to \(f(\lambda \cdot X, K^X(\lambda))\) that \(Y\) has also deterministic characteristics.
From the relationship of ordinary and stochastic exponentials given in \citet{JacodShiryaev03}, Theorem II.8.10, 
 we obtain that 
\(
M = e^{Y} = \mathcal{E}(\overline{Y}),
\)
where \(\overline{Y}\) is a semimartingale with \(\Delta \overline{Y} = (e^{\Delta Y} - 1) > -1\). 
We deduce from \citet{JacodShiryaev03}, Equation II.8.14 that \(\overline{Y}\) inherits the property of deterministic characteristics from \(Y\).
Due to Remark \ref{rem-expocomp}(b), the condition \(e^{\langle \lambda, x\rangle} \indik_{\{\langle \lambda, x\rangle > 1\}} * \nu \in \mathcal{V}\) yields that \(\lambda \cdot X\) is exponentially special. Thus, since \(K^X(\lambda)\) is the exponential compensator of \(\lambda \cdot X\), c.f. Proposition \ref{Laplace-cum} (i), \(M\) is a local martingale.
The claim now follows from Proposition 3.12 in \citet{KallsenMuhleKarbe10}.
\end{proof}



\section{Applications to Finanical Models}
In this section we present two applications of the results from Section \ref{s:main} to financial modeling. A detailed overview concerning applications of general semimartingales in finance is for instance provided by the monographs of \citet{Shiryaev99}, \citet{ContTankov03}, \citet{MusielaRutkowski05} and \citet{JeanblancYorChesney09}. 
\subsection{Stochastic Volatility Asset Price Model}\label{SVAPM}
Here, we illustrate how the conditions of Section \ref{s:main} can be used to facilitate pricing in arbitrage-free models driven by semimartingales. 
Let $(\Omega, \lijepof, (\lijepof_t)_{0 \leq t \leq T}, \P)$ be a stochastic basis, where $T >0$ denotes a finite time horizon. We model the asset price $S$ and a bank account $B$ with stochastic interest rate $r$ by
\begin{equation}\label{semimartmodel}
S:= S_0e^{\sigma^S\cdot X^S - V},\qquad  B:=e^{\sigma^r\cdot X^r}
\end{equation}
with $S_0>0$, a $d$-dimensional semimartingale $X:=(X^S,X^r)$ with $X^S$ $d_1$-dimensional and $X^r$ $d_2$-dimensional such that  $d_1+d_2=d$,
and a $d$-dimensional predictable process $\sigma:=(\sigma^S,-\sigma^r)$ with $\sigma^S \in L(X^S)$ and $\sigma^r \in L(X^r)$  such that $\sigma\cdot X$ is exponentially special.
We assume that the process $V$ is the exponential compensator of \(\sigma^S \cdot X^S - \sigma^r \cdot X^r\), c.f. Proposition \ref{Laplace-cum}.
Thanks to this assumption, the discounted stock price \(\widetilde{S} := B^{-1} S\) is a local martingales, i.e. in other words \(\P\) is a risk-neutral probability measure.
According to the fundamental theorem of asset pricing for general semimartingales in \citet{DelbaenSchachermayer98}, the No Free Lunch With Vanishing Risk (NFLVR) holds is this case.

Note that in general the risk-neutral probability measure may not be unique and the model is incomplete. Let us now consider a European call option with strike $K>0$ with payoff
$(S_T-K)^+$ at maturity $T>0$. Its fundamental price under $\P$, denoted by $C^\ast_t$  for any $t\in [0, T]$,  is given by
\begin{equation}\label{callprice}
C^\ast_t := B_t\E_{\P}\big(B_T^{-1} (S_T-K)^+\big|\lijepof_t\big) \le B_t\E_{\P}\big(\widetilde{S}_T \big|\lijepof_t\big) \le  B_t \widetilde{S}_t  = S_t <\infty,
\end{equation}
which is well-defined and finite a.s. The inequality $\E(\widetilde{S}_T|\lijepof_t)\le \widetilde{S}_t$ is a consequence of $\widetilde{S}$ being a positive local martingale and hence a supermartingale.
The price $C^\ast$ is an arbitrage-free price, which -- even in the case of a complete
model -- might be non-unique. This subtle issue is closely related to financial bubbles, see for example Definition 3.6 in \citet{JarrowProtterShimbo2010} and Definition 2.10 in \citet{BiaginiFoellmerNedelcu14}. For a detailed mathematical treatment we refer to \citet{Protter13}.

When $\widetilde{S}$ is a true martingale, these delicate issues do not appear:  The asset price has no bubble and the market prices coincide with the fundamental prices -- this was proved for example in the setting of \citet{JarrowProtterShimbo2010}. 
Thus, our Section \ref{s:main} provides convenient conditions  to exclude ambiguities in the pricing due to the possible presence of bubbles.

To illustrate a further benefit of the explicit martingale conditions from Section \ref{s:main}  we use the true martingale property of the discounted asset price to perform a change of numeraire which reduces the complexity of a pricing problem at hand.
Considering for example a call option as above, in order 
to compute the expectation in \eqref{callprice} directly, information on the joint distribution of $S$ and $B$ is required. 
Here the true  martingale property of $\widetilde{S}$ allows to facilitate the computation of the expectation by a change of numeraire. More precisely, we can express the call price as a conditional expectation of a function of the asset value $S_T$ solely. Defining a probability measure $\widetilde{\P}$ via $\frac{\ud \widetilde{\P}}{\ud \P}|_{\lijepof_t}:=S_0^{-1}\widetilde{S}_t$ for $0\le t\le T$, and denoting by $\E_{\widetilde{\P}}$ the expectation under $\widetilde{\P}$, Bayes formula yields
\begin{equation}\label{callpricenew}
C_t=S_t \E_{\widetilde{\P}}\big((1-KS_T^{-1})^+ \big|\lijepof_t\big).
\end{equation}
Compared with the original pricing formula 
$C_t = B_t\E_{\P}\big(B_T^{-1} (S_T-K)^+\big|\lijepof_t\big)$, %
the random variable \(B_T\) does not appear in the conditional expectation in \eqref{callpricenew}. This typically facilitates the computation since the semimartingale characteristics of \(S\) are known under the new probability measure.

By combining Corollary \ref{local_martingale} and Proposition \ref{local_martWithConst} the following characterization of the true martingale property for the semimartingale asset price model defined above.
\begin{corollary}\label{coro 4.4}
Assume that \(X\) is quasi-left continuous and denote its local characteristics by $(b,c,F;A)$. 
If $(b,c,F;A)$ and $\sigma$ satisfy 
 \ref{eq:general-case2-b} of Corollary~\ref{local_martingale}, resp. condition \ref{eq:condition-Libor-case1T} of Proposition \ref{local_martWithConst}, 
the discounted asset price process $\widetilde{S}$ is a true martingale. 
\end{corollary}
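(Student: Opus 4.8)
The plan is to observe that, up to the positive multiplicative constant $S_0$, the discounted asset price $\widetilde{S}$ coincides with the process $M = e^{\sigma\cdot X - K^X(\sigma)}$ already treated in Corollary~\ref{local_martingale} and Proposition~\ref{local_martWithConst}, and then to invoke those results with the choice $\lambda = \sigma$.

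First I would rewrite $\widetilde{S}$ explicitly. Since $B^{-1} = e^{-\sigma^r\cdot X^r}$ and $S = S_0\, e^{\sigma^S\cdot X^S - V}$, we obtain
\[
\widetilde{S} = B^{-1}S = S_0\, e^{\sigma^S\cdot X^S - \sigma^r\cdot X^r - V} = S_0\, e^{\sigma\cdot X - V},
\]
where the last equality uses the definition $\sigma = (\sigma^S, -\sigma^r)$, so that $\sigma\cdot X = \sigma^S\cdot X^S - \sigma^r\cdot X^r$. Note that $\sigma \in L(X)$ and $\sigma \cdot X$ is exponentially special by the standing model assumptions, so all the objects below are well defined.

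Second I would identify $V$. By the model assumption, $V$ is the exponential compensator of $\sigma\cdot X$, so Proposition~\ref{Laplace-cum} gives $V = K^X(\sigma)$ and hence $\widetilde{S} = S_0\, M$ with $M = e^{\sigma\cdot X - K^X(\sigma)}$. This is exactly the process appearing in Corollary~\ref{local_martingale} and Proposition~\ref{local_martWithConst} for the integrand $\lambda = \sigma$.

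Finally, since $X$ is assumed quasi-left continuous and $(b,c,F;A)$ together with $\sigma$ satisfy condition~\ref{eq:general-case2-b}, resp.\ condition~\ref{eq:condition-Libor-case1T}, Corollary~\ref{local_martingale}, resp.\ Proposition~\ref{local_martWithConst}, yields directly that $M$ is a true martingale. Because multiplication by the positive constant $S_0$ preserves the martingale property, $\widetilde{S} = S_0\, M$ is a true martingale, which is the assertion. The only genuine content here is the algebraic identification $\lambda = \sigma$ together with $V = K^X(\sigma)$; once this bookkeeping is in place the statement is an immediate corollary, so I do not expect any substantial obstacle.
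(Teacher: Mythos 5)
Your proposal is correct and follows exactly the route the paper intends: the paper states Corollary \ref{coro 4.4} without a written proof, as an immediate consequence ``by combining Corollary \ref{local_martingale} and Proposition \ref{local_martWithConst},'' which is precisely your identification $\widetilde{S} = S_0\, e^{\sigma\cdot X - K^X(\sigma)}$ via $\sigma\cdot X = \sigma^S\cdot X^S - \sigma^r\cdot X^r$ and the uniqueness of the exponential compensator, followed by applying those results with $\lambda = \sigma$. Your bookkeeping, including the observation that the positive constant $S_0$ is harmless, fills in the omitted details faithfully.
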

Under the conditions of Corollary \ref{coro 4.4} the fair price at time \(t\) of the call option with maturity $T$ and strike $K$ is therefore given by \(C_t\) in \eqref{callpricenew}. 

\subsection{Semimartingale Libor model}
\label{s:Libor}
In this subsection we apply the results from Section \ref{s:main} to Libor models. These are models for discretely compounded forward interest rates known as Libor rates, where the term Libor stems from the London Interbank Offered Rate. The Libor models were introduced in \citet{BraceGatarekMusiela97} and \citet{MiltersenSandmannSondermann97} and later further developed and studied by many authors. We refer to \citet{MusielaRutkowski05}, Section 12.4, for a detailed overview. 

The challenge in modeling Libor rates is to simultaneously define the rates for different maturities as local martingales under different equivalent measures which ensures the absence of arbitrage.
These measures are in fact forward measures and they are interconnected via the Libor rates themselves. 
A convenient way to obtain such a model is by backward construction, following the pioneering work of \citet{MusielaRutkowski97b}.
This construction relies on the martingale property of Libor rates (under the corresponding forward measures), which allows to define changes of measure.
In the backward construction the Libor rates thus have to be not only local, but true martingales under their corresponding forward measures. When the model is driven by a continuous semimartingale this is standard by using Novikov type conditions, but verifying that the Libor rates are true martingales in general semimartingale models including jumps is more involved and has not been properly addressed in the financial literature.
Using explicit conditions from Section \ref{s:main}, we study this issue in detail below to close this gap.

Let us begin by describing a general semimartingale Libor model. Assume that $T^{*}>0$ is a fixed finite time horizon and we are given a pre-determined collection of maturities
 $0=T_{0} < T_{1} < \ldots < T_{n}=T^{*}$, with $\delta_k := T_{k+1} - T_k$ for \(k = 0, \ldots, n-1\). Moreover, let $(\Omega, \lijepof_{T^{*}}, (\lijepof_{t})_{0 \leq t \leq T^{*}}, \P_{T^{*}})$ be a stochastic basis. 
A general semimartingale Libor model consists of a family of semimartingales modeling the Libor rates \((L(\cdot, T_k))_{1 \leq  k \leq n-1}\) for lending periods \(([T_k, T_{k+1}])_{1 \leq k \leq n-1}\) and a family of probability measures \((\P_{T_k})_{1 \leq k \leq n}\), where \(L(\cdot, T_k)\) and \(\P_{T_k}\) are defined on \((\Omega, \mathcal{F}_{T_k}, (\mathcal{F}_t)_{0 \leq t \leq T_k})\) and \(\P_{T_n} = \P_{T^*} \), such that
\begin{enumerate}[leftmargin=3.9em, label=(A\arabic{*}*),widest=(A4)]
\item[(SML1)] \(L(\cdot, T_k)\) is a \(\P_{T_{k+1}}\)-martingale for all \(k = 1, \ldots, n-1\).
\\[-1.3ex]
\item[(SML2)] For \(k= n-1,\ldots, 1\)
\begin{align*}
\frac{\de \P_{T_k}}{\de \P_{T_{k+1}}} = \frac{1 + \delta_k L(T_k, T_k)}{1 + \delta_k L(0, T_k)}. 
\end{align*}
\end{enumerate}

For each \(k\) the probability measure \(\P_{T_k}\) is called the \textit{forward Libor measure} for maturity \(T_k\), cf. \citet{MusielaRutkowski05}, Definition 12.4.1. The measure \(\P_{T_k}\) is  in fact the forward martingale measure associated with maturity  \(T_k\) and the density process above is a forward price process. This can be seen from the link between forward Libor rates and zero-coupon bond prices, see \citet{MusielaRutkowski05}, Sections 12.1.1 and 12.4.4.

Below we present the main ideas of the backward construction of the Libor model in a semimartingale framework. We start by modeling the Libor rate with the most distant maturity under a given probability measure and then proceed backwards. We define in each step the next forward measure via a density process based on the previously modeled Libor rates and model the next Libor rate under this measure.

Let  $X$ be an $\erd$-valued  semimartingale on the stochastic basis  $(\Omega, \lijepof_{T^{*}}, (\lijepof_{t})_{0 \leq t \leq T^{*}}, \P_{T^{*}})$. We  start by modeling the Libor rate \(L(\cdot, T_{n-1})\) for maturity \(T_{n-1}\) by
\begin{align}\label{construction step 1}
L(t, T_{n-1}) := L(0, T_{n-1}) \exp\big\{ \lambda(\cdot, T_{n-1}) \cdot X_t - K^X_t(\P_{T^*}, \lambda(\cdot, T_{n-1}))\big\}, 
\end{align}
for $t\leq T_{n-1}$, 
where \(L(0, T_{n-1}) > 0\) and \(\lambda(\cdot, T_{n-1}) \in L(X)\) is a volatility process such that the stochastic integral \(\lambda(\cdot, T_{n-1}) \cdot X\) is \(\P_{T^*}\)-exponentially special with \(K^X(\P_{T^*},\lambda(\cdot, T_{n-1}))\) its \(\P_{T^*}\)-exponential compensator. Hence,  \(L(\cdot, T_{n-1})\) is a \(\P_{T^*}\)-local martingale.  
Assuming that \(L(\cdot, T_{n-1})\) is a true \(\P_{T^*}\)-martingale, we can define the probability measure \(\P_{T_{n-1}}\) on \((\Omega, \mathcal{F}_{T_{n-1}})\) by the Radon-Nikodym derivative 
\begin{align}
\frac{\de \P_{T_{n-1}}}{\de \P_{T^*}} = \frac{1 + \delta_{n-1} L(T_{n-1}, T_{n-1})}{1 + \delta_{n-1} L(0, T_{n-1})}.
\end{align}
Moreover, we obtain for \(t \leq T_{n-1}\) 
\begin{align}
\frac{\de \P_{T_{n-1}}}{\de \P_{T^*}}\bigg|_{\mathcal{F}_t} = \frac{1 + \delta_{n-1} L(t, T_{n-1})}{1 + \delta_{n-1} L(0, T_{n-1})}.
\end{align}
Now we recursively model the Libor rates  $L(\cdot, T_k)$ for \(k = n-2, \ldots, 1\) by
\begin{align}
\label{eq:Libor-rate}
L(t, T_k) := L(0, T_k)\exp \big\{\lambda(\cdot, T_{k}) \cdot X_t - K^{X}_t(\P_{T_{k+1}}, \lambda(\cdot, T_{k}))\big\},
\end{align}
for $t\leq T_k$, where \(L(0, T_k) > 0\) and \(\lambda(\cdot, T_k)\in L(X)\) is a volatility process such that \(\lambda(\cdot, T_k)\cdot X\) is \(\P_{T_{k+1}}\)-exponentially special with \(\P_{T_{k+1}}\)-exponential compensator \(K^X(\P_{T_{k+1}}, \lambda(\cdot, T_k))\). As above, this means that \(L(\cdot, T_{k})\) is a \(\P_{T_{k+1}}\)-local martingale.  Note that the Libor rate for the interval starting at $T_0=0$ and ending at $T_1$ is simply  the given spot Libor rate $L(0, T_0)>0$. 
The probability measure \(\P_{T_{k}}\) is defined on \((\Omega, \mathcal{F}_{T_k})\) by the Radon-Nikodym derivative
\begin{align}
\frac{\de \P_{T_{k}}}{\de \P_{T_{k+1}}} = \frac{1 + \delta_{k} L(T_{k}, T_{k})}{1 + \delta_{k} L(0, T_{k})}, 
\end{align}
where it has to be assumed that \(L(\cdot, T_{k})\) is a true \(\P_{T_{k+1}}\)-martingale. Then we have for $t\leq T_k$
\begin{align}
\frac{\de \P_{T_{k}}}{\de \P_{T_{k+1}}}\bigg|_{\mathcal{F}_t} = \frac{1 + \delta_{k} L(t, T_{k})}{1 + \delta_{k} L(0, T_{k})}.
\end{align}
Furthermore, we obtain that the probability measure \(\P_{T_{k+1}}\) is related to \(\P_{T^*}\) via 
\begin{eqnarray}
\label{dPTk}
\frac{\ud \P_{T_{k+1}}}{\ud \P_{T^{*}}}\bigg|_{\lijepof_{t}}&=&\prod_{i=k+1}^{n-1}\frac{ 1+ \delta_i L(t,T_i)}{ 
 1+ \delta_i L(0,T_i)},
\quad t \leq T_{k+1}.
\end{eqnarray}
Note that the construction is well-defined if the Libor rates \(L(\cdot, T_k)\) are \(\P_{T_{k+1}}\)-martingales for all \(k = 1,\ldots, n-1\).

To justify the backward construction \eqref{construction step 1} -- \eqref{dPTk} of the measures $(\P_{T_{k}})_{1 \leq k \leq n-1}$, we prove the required martingale property of the Libor rates in the proposition below.

\begin{proposition}
\label{r:semimartingale-Libor}
Let $X$ in equation \eqref{eq:Libor-rate} be an  $\erd$-valued quasi-left continuous semimartingale with differential characteristics $(b^{T^\ast}, c, F^{T^\ast};A)$ with respect to $P_{T^\ast}$, and non-negative $\lambda(\cdot, T_{k}) \in L(X)$.
Assume \\[-3ex]
\begin{itemize}
\item[\rm{(SL)}] for all $i=1, \ldots, n-1$ there exists a non-negative constant $\kappa$ such that a.s.\\[-3ex]
\begin{align*}
\int_{0}^{T^*}  \la& \lambda(t, T_{i}), c_{t} \lambda(t, T_{i})\ra  \de A_{t} 
\\&+ \int_{0}^{T^*} \int_{\erd}\left(1- \sqrt{e^{\la \lambda(t, T_{i}), x\ra}}\ \right)^2 e^{\la \sum_{k=i+1}^{n-1} \lambda(t, T_{k}), x\ra}\ F^{T^\ast}_t( \dx)\de A_t
\leq \kappa, 
\end{align*}
where we use the convention \(\sum_{\emptyset} = 0\).
\end{itemize}
Then for each $k=1,\ldots,n-1$, the process $L(\cdot,T_k)$ from \eqref{eq:Libor-rate} is a martingale with respect to $\P_{T_{k+1}}$ given by \eqref{dPTk}.
\end{proposition}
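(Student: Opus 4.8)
The plan is to argue by backward induction on $k$, running from $k=n-1$ down to $k=1$, the induction hypothesis at stage $k$ being that $L(\cdot,T_{k+1}),\dots,L(\cdot,T_{n-1})$ are already true martingales under their respective forward measures, so that $\P_{T_{k+1}}$ is a genuine probability measure defined through \eqref{dPTk}. At each stage $L(\cdot,T_k)$ is by construction the $\P_{T_{k+1}}$-stochastic exponential of $\lambda(\cdot,T_k)\cdot X$ minus its exponential compensator (cf. Proposition \ref{Laplace-cum}), hence a $\P_{T_{k+1}}$-local martingale, and the task reduces to upgrading this to the true martingale property by verifying the boundedness condition \ref{eq:condition-Libor-case1} of Proposition \ref{local_martWithConst} under $\P_{T_{k+1}}$. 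The base case $k=n-1$ is immediate: there $\P_{T_{k+1}}=\P_{T^*}$, the convention $\sum_\emptyset=0$ makes the exponential weight in (SL) equal to one, and (SL) for $i=n-1$ is precisely condition \ref{eq:condition-Libor-case1} for $\lambda(\cdot,T_{n-1})$ under $\P_{T^*}$.

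The heart of the induction step is to express the $\P_{T_{k+1}}$-characteristics of $X$ in terms of the given $\P_{T^*}$-characteristics $(b^{T^*},c,F^{T^*};A)$. Since $\P_{T_{k+1}}\sim\P_{T^*}$ with density process $D^{k+1}_t=\prod_{i=k+1}^{n-1}\frac{1+\delta_i L(t,T_i)}{1+\delta_i L(0,T_i)}$, I would invoke Girsanov's theorem for semimartingales and for random measures (Jacod--Shiryaev, Theorems III.3.24 and III.5.19). The second characteristic $C=c\cdot A$ is invariant under the equivalent change, the operational time $A$ may be kept fixed, and only the drift and the jump compensator change; crucially, condition \ref{eq:condition-Libor-case1} does not involve the drift, so it suffices to identify the new L\'evy kernel. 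Quasi-left continuity makes each exponential compensator $K^X(\P_{T_{i+1}},\lambda(\cdot,T_i))$ continuous, so $L(\cdot,T_i)$ jumps only through $X$, with $\Delta L(t,T_i)=L(t-,T_i)\big(e^{\la\lambda(t,T_i),\Delta X_t\ra}-1\big)$; consequently $D^{k+1}_t/D^{k+1}_{t-}$ is a function of $\Delta X_t$ alone, and the Girsanov jump kernel reads
\[
Y^{k+1}_t(x)=\prod_{i=k+1}^{n-1}\frac{1+\delta_i L(t-,T_i)\,e^{\la\lambda(t,T_i),x\ra}}{1+\delta_i L(t-,T_i)},\qquad F^{T_{k+1}}_t(\dx)=Y^{k+1}_t(x)\,F^{T^*}_t(\dx).
\]
I would also record that $X$ stays quasi-left continuous under $\P_{T_{k+1}}$ (the kernel is finite, so $\nu^{T_{k+1}}(\{t\}\times\erd)=0$), which is needed to apply Proposition \ref{local_martWithConst}.

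With this in hand, writing out condition \ref{eq:condition-Libor-case1} for $\lambda(\cdot,T_k)$ under $\P_{T_{k+1}}$ and substituting $F^{T_{k+1}}=Y^{k+1}F^{T^*}$, the desired bound follows from (SL) for $i=k$ once the pointwise kernel estimate $Y^{k+1}_t(x)\le e^{\la\sum_{i=k+1}^{n-1}\lambda(t,T_i),x\ra}$ is established. This reduces, factor by factor, to the elementary inequality $\frac{1+ae^{y}}{1+a}\le e^{y}$ for $a\ge0$, applied with $a=\delta_i L(t-,T_i)\ge0$ and $y=\la\lambda(t,T_i),x\ra$; since the left-hand side is a convex combination of $1$ and $e^{y}$, this is exactly where non-negativity enters, the inequality holding on $\{\la\lambda(t,T_i),x\ra\ge0\}$ and hence under $\lambda(\cdot,T_i)\ge0$ together with the sign of the driving jumps. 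Combining this with the non-negativity of the integrand $(1-\sqrt{e^{\la\lambda(t,T_k),x\ra}})^2$ shows that (SL) dominates condition \ref{eq:condition-Libor-case1}, and Proposition \ref{local_martWithConst} then yields that $L(\cdot,T_k)$ is a true (indeed uniformly integrable) $\P_{T_{k+1}}$-martingale, closing the induction. The main obstacle is the second paragraph: correctly transporting the characteristics across the iterated measure changes and pinning down the explicit kernel $Y^{k+1}$; once that form is available the final comparison is routine, though one must be attentive to the sign of $\la\lambda,x\ra$ that makes the weight $e^{\la\sum\lambda,x\ra}$ in (SL) sufficient rather than the larger $e^{\la\sum\lambda,x\ra^{+}}$-type weight.
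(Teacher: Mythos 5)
Your proof follows the paper's argument essentially step for step: backward induction with the base case $k=n-1$ settled by (SL) and Proposition~\ref{local_martWithConst}, Girsanov's theorem (Jacod--Shiryaev, Theorem III.3.24) applied recursively to get $C^{T_{k+1}}=C$ and $F^{T_{k+1}}_t(\dx)=\prod_{l=k+1}^{n-1}\beta(t,x,T_l)\,F^{T^*}_t(\dx)$ with $\beta(t,x,T_l)=\frac{\delta_l L(t-,T_l)}{1+\delta_l L(t-,T_l)}\big(e^{\la \lambda(t,T_l),x\ra}-1\big)+1$ --- your kernel $Y^{k+1}_t(x)$ is exactly this product --- followed by the pointwise bound $\beta(t,x,T_l)\le e^{\la \lambda(t,T_l),x\ra}$ to reduce condition \ref{eq:condition-Libor-case1T} under $\P_{T_{k+1}}$ to (SL). The sign caveat you flag at the end is a genuine subtlety, but it is equally present in the paper's own proof, which asserts $\beta(t,x,T_l)\le e^{\la \lambda(t,T_l),x\ra}$ without comment even though this inequality holds only where $\la \lambda(t,T_l),x\ra\ge 0$ (otherwise only $\beta\le 1$, i.e. an $e^{\la\lambda(t,T_l),x\ra^{+}}$-type weight, is available).
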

\begin{proof}
For $k=n-1$, the assertion follows directly from assumption (SL) and Proposition~\ref{local_martWithConst}.

For $k \leq n-2$, denote the semimartingale characteristics of $X$ with respect to $\P_{T_{k+1}}$ by $(B^{T_{k+1}}, C^{T_{k+1}}, \nu^{T_{k+1}})$. 
Next we compute these characteristics by backward induction and Girsanov's theorem as given by Theorem III.3.24 in \citet{JacodShiryaev03}.
We shortly give some details on the application of Girsanov's theorem.
Denote \(\de \P_{T_{n-1}}/\de \P_{T^*}|_{\mathcal{F}_\cdot} =: Z^{T_{n-1}}\) and note that 
\[L(\cdot, T^*) = L(0, T_{n-1}) \mathcal{E}\left( \lambda(\cdot, T_{n-1}) \cdot X^{c, T^*} + \left(e^{\langle\lambda(\cdot, T_{n-1}), x\rangle} - 1\right) * \left(\mu^X - \nu^{T^*}\right)\right),\] 
where \(X^{c, T^*}\) denotes the continuous local \(\P_{T^*}\)-martingale part of \(X\).
We have 
\begin{align*}
M^{\P_{T^*}}_{\mu^X}( Z^{T_{n-1}} |\widetilde{\mathcal{P}}) 
&= Z^{T^{n-1}}_- + \frac{\delta_{n-1} L(\cdot -, T_{n-1})}{1 + \delta_{n-1}L(0, T_{n-1})} \left(e^{\langle \lambda(\cdot, T_{n-1}, x\rangle} - 1\right).
\end{align*}
Now Girsanov's theorem yields that 
\begin{align*}
C^{T_{n-1}} = C,\quad \nu^{T_{n-1}} (\de t, \de x) = \beta(t, x, T_{n-1}) F^{T^*}_t (\de x) \de A_t,
\end{align*}
with
$$
\beta(t,x,T_{l}) := \frac{\delta_{l}L(t-,T_{l})}{1+\delta_{l}L(t-,T_{l})} \left(e^{\la \lambda(t, T_{l}),x \ra}-1\right) +1,
$$
for $l=1, \ldots, n-1$.
Repeating these steps, by backward induction we obtain 
\begin{align*}
C^{T_{k+1}}  =C ,\quad
\nu^{T_{k+1}}(\de t, \de x)  = \prod_{l=k+1}^{n-1} \beta(t,x,T_{l}) F^{T^{*}}_t( \de x)\de A_t.
\end{align*}
Noting that \(\beta(t, x, T_l) \leq e^{\langle \lambda(t, T_l), x\rangle}\), the claim follows from Proposition~\ref{local_martWithConst}.
\end{proof}
Let us link our discussion to the L\'evy Libor model of \citet{EberleinOezkan05} in which the driving process $X$ 
 is assumed to be an $\erd$-valued PIIAC 
 with differential characteristic  $(0, c, F^{T^{*}})$ under $\P_{T^{*}}$. 
Eberlein and \"Ozkan impose the following assumptions:
\textit{
For some $M, \varepsilon >0$ and every $k=1,\ldots,n-1$ we have
\begin{enumerate}[label={\rm (L\arabic{*})}]
 \item\label{EM}
$\int_{0}^{T^{*}} \int_{|x| >1} e^{\la u, x\ra} F^{T^{*}}_{t} (\de x) \de t < \infty$ for every $u \in [-(1+\varepsilon)M, (1+\varepsilon)M]^{d}$,
\\[-0.5ex]
\item\label{lambda} $\lambda(\cdot, T_{k}): [0, T^{*}] \to \er^{d}_{+}$ is a bounded, nonnegative function such that for $t >T_{k}$, $\lambda(t, T_{k})=0$ and $ \sum_{k=1}^{n-1} \lambda^{j}(t, T_{k}) \leq M,$  for all $t \in [0, T^{*}]$ and every coordinate $j \in \{1, \ldots, d\}$,\\[-0.5ex]
\item\label{det} $\lambda(\cdot, T_{k}): [0, T^{*}] \to \er^{d}_{+}$ is deterministic.
\end{enumerate}
}
Let us point that even when the driving process has deterministic characteristics under \(\P_{T^{*}}\) and \(\lambda\) is deterministic (as in the case above), the characteristics of \(X\) under \(\P_{T_{k}}\) for \(k = 1, \ldots, n-1\) are stochastic. 

We obtain the following sufficient conditions for the L\'evy Libor model, where we also allow \(\lambda\) to be stochastic. 
\begin{corollary}\label{prop-liboruimart}
Assume that \(\sum_{j = 1}^n |\lambda (\cdot, T^j)| \leq N\) for a non-negative constant \(N\), and that there exists a non-negative constant \(\kappa\) such that 
\[
\int_0^{T^*} \int_{|x| > 1} e^{N |x|} F^{T^*}_t (\de x) \de t \leq \kappa.
\]
Then for each $k=1,\ldots,n-1$ the process $L(\cdot,T_k)$ defined in \eqref{eq:Libor-rate} is a martingale with respect to $\P_{T_{k+1}}$ given by \eqref{dPTk}.
\end{corollary}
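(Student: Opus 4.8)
The plan is to deduce the claim from Proposition~\ref{r:semimartingale-Libor} by checking that its hypothesis~(SL) follows from the two conditions assumed here. Since the driving process is a PIIAC, it is quasi-left continuous, its characteristics $(0,c,F^{T^*})$ are deterministic with $A_t=t$, and they obey the integrability assumption~\eqref{eq:cond-2-PIIAC}; together with the standing nonnegativity of the volatilities this places us in the setting of Proposition~\ref{r:semimartingale-Libor}. It therefore suffices to bound, uniformly in $t$ and for each fixed $i\in\{1,\dots,n-1\}$, the expression
\[
\int_0^{T^*}\!\la\lambda(t,T_i),c_t\lambda(t,T_i)\ra\,\de t+\int_0^{T^*}\!\!\int_{\erd}\Big(1-\sqrt{e^{\la\lambda(t,T_i),x\ra}}\Big)^{2}e^{\la\sum_{k=i+1}^{n-1}\lambda(t,T_k),x\ra}F^{T^*}_t(\de x)\,\de t
\]
by a finite constant. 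The continuous part is immediate: $\la\lambda(t,T_i),c_t\lambda(t,T_i)\ra$ is bounded by a fixed multiple of $N^2\|c_t\|$, which integrates to a finite quantity by the $\|c\|$-term of~\eqref{eq:cond-2-PIIAC}.

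The heart of the matter is the jump integral, which I would estimate by splitting the domain at $|x|=1$. The decisive point is that every coefficient vector occurring in an exponent, namely $\sum_{k=i+1}^{n-1}\lambda(t,T_k)$ and $\sum_{k=i}^{n-1}\lambda(t,T_k)$, has Euclidean norm at most $N$ by hypothesis, so each satisfies $\la v,x\ra\le N|x|$. On $\{|x|>1\}$ I would use the elementary inequality $(1-\sqrt{e^{u}})^{2}\le 1+e^{u}$, valid because the cross term $-2e^{u/2}$ is nonpositive; this bounds the integrand by $e^{\la\sum_{k=i+1}^{n-1}\lambda(t,T_k),x\ra}+e^{\la\sum_{k=i}^{n-1}\lambda(t,T_k),x\ra}\le 2e^{N|x|}$, whose integral is at most $2\kappa$ by the assumed exponential-moment bound. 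On $\{|x|\le 1\}$ I would instead use the quadratic vanishing of the Novikov-type factor: with $u=\la\lambda(t,T_i),x\ra$ ranging over the compact interval $[-N,N]$, the function $u\mapsto(1-e^{u/2})^{2}/u^{2}$ is bounded by some constant $C$, so the factor is at most $C|x|^{2}$, while the exponential weight is at most $e^{N}$ there; the resulting bound $Ce^{N}|x|^{2}$ is integrable against $F^{T^*}_t(\de x)\,\de t$ by the $(|x|^{2}\wedge 1)$-term of~\eqref{eq:cond-2-PIIAC}.

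Adding the three estimates produces a finite constant that is independent of $\omega$ and, after taking a maximum over the finitely many values of $i$, independent of $i$; hence~(SL) holds and Proposition~\ref{r:semimartingale-Libor} delivers the martingale property of each $L(\cdot,T_k)$ under $\P_{T_{k+1}}$. I expect the only genuine difficulty to be the separate treatment of the two regions of the jump integral: near the origin the L\'evy measure may be infinite, so one must exploit the $O(|x|^{2})$ decay of the integrand, whereas in the tail the exponential growth of the weight forces one to invoke precisely the exponential-moment hypothesis; reconciling these two regimes is the crux of the estimate.
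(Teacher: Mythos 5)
Your proof is correct and takes essentially the same route as the paper's: both reduce the claim to verifying hypothesis (SL) of Proposition~\ref{r:semimartingale-Libor}, split the jump integral at $|x|=1$, bound the integrand near the origin by a constant times $|x|^2$, control the tail via $(1-\sqrt{x})^2\le 1+x$ and the Cauchy--Schwarz estimate $\langle v,x\rangle\le N|x|$ (yielding the bound $2\kappa$ from the exponential-moment hypothesis), and bound the continuous part by $N^2\int_0^{T^*}\|c_t\|\,\de t$. Your version is merely a bit more explicit than the paper's in spelling out why the small-jump constant exists and why the resulting bounds are finite via \eqref{eq:cond-2-PIIAC}.
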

\begin{proof}
It suffices to show that (SL) is satisfied. 
Note that we find a non-negative constant \(K^*\) such that for any \(i = 1, ..., n-1\) for all \(x \in \mathbb{R}^d\) with \(|x| \leq 1\)
\begin{align*}
 \left(1- \sqrt{e^{\la \lambda(t, T_{i}), x\ra}}\ \right)^2 &e^{\la \sum_{k=i+1}^{n-1} \lambda(t, T_{k}), x\ra} \leq K^* |x|^2.
\end{align*}
Next we bound the large jumps. 
Using the fact that \((1 - \sqrt{x})^2 \leq 1 + x\) for \(x > 0\) and some non-negative constant \(K\), and the Cauchy-Schwarz inequality, we obtain
\begin{align*}
\int_{0}^{T^*} \int_{|x| > 1}&\left(1- \sqrt{e^{\la \lambda(t, T_{i}), x\ra}}\ \right)^2 e^{\la \sum_{k=i+1}^{n-1} \lambda(t, T_{k}), x\ra}\ F^{T^\ast}_t( \dx)\de t
\\&\leq \int_0^{T^*} \int_{|x| > 1} \left(e^{\langle \sum_{k = i+1}^{n-1} \lambda(t, T_k), x\rangle} + e^{\langle \sum_{k = i}^{n-1} \lambda(t, T_k), x\rangle}\right) F^{T^*}_t (\de x) \de t
\\&\leq 2 \int_0^{T^*} \int_{|x| > 1} e^{N |x| } F^{T^*}_t (\de x) \de t.
\end{align*}
Finally, since 
\begin{align*}
\int_0^{T^*} \langle \lambda(t, T_i), c_t \lambda (t, T_i)\rangle \de t \leq N^2\int_0^{T^*}  \| c_t\| \de t,
\end{align*}
where \(\| \cdot \|\) denotes the operator norm of \(c\), we conclude that (SL) holds. 
This concludes the proof.
\end{proof}

As mentioned in the introduction of the section, the martingale property of Libor rates under their corresponding measures is crucial for the validity of the backward construction of Libor models.  Therefore, Proposition \ref{r:semimartingale-Libor} and Corollary \ref{prop-liboruimart} provide a theoretical justification of the construction of the L\'evy Libor model by \citet{EberleinOezkan05}, and more generally of Libor models driven by quasi-left continuous semimartingales.


\bibliographystyle{chicago}
\bibliography{references1}

\end{document}